\renewcommand{\Pr}{\ensuremath{\operatorname{Pr}}}
\newtheorem{corollary}{Corollary}
\newtheorem{theorem}{\bf Theorem}
\newtheorem{proposition}{\bf Proposition}
\newtheorem{definition}{\bf Definition}
\newcounter{step}
\newlength{\totlinewidth}
  {\end{list}%
  \rule{\linewidth}{1pt}}
\newcounter{substep}
\newlength{\aligntop}
\newlength{\alignbot}
\renewenvironment{align}{%
  \vspace{\aligntop}
  \start@align\@ne\st@rredfalse\m@ne
}{%
  \math@cr \black@\totwidth@
  \egroup
  \ifingather@
    \restorealignstate@
    \egroup
    \nonumber
    \ifnum0=`{\fi\iffalse}\fi
  \else
    $$%
  \fi
  \ignorespacesafterend%
  \vspace{\alignbot}\par\noindent
} \makeatother
\begin{document}
\title{\huge Virtual Reality over Wireless Networks: Quality-of-Service Model and Learning-Based Resource Management}
\author{{Mingzhe Chen\IEEEauthorrefmark{1}}, Walid Saad\IEEEauthorrefmark{2}, and Changchuan Yin\IEEEauthorrefmark{1}\vspace*{0em}\\ 
\authorblockA{\small \IEEEauthorrefmark{1}Beijing Laboratory of Advanced Information Network, Beijing University of Posts and Telecommunications, Beijing, China 100876,\\ Emails: \protect\url{chenmingzhe@bupt.edu.cn}, \protect\url{ccyin@ieee.org.} \\
\IEEEauthorrefmark{2}Wireless@VT, Bradley Department of Electrical and Computer Engineering, Virginia Tech, Blacksburg, VA, USA, Email: \protect\url{walids@vt.edu.}\\
}\vspace*{-3em}
\thanks{{
This work was supported in part by the National Natural Science Foundation of China under Grant 61671086 and Grant 61629101, the 111 Project under Grant B17007, the Director Funds of Beijing Key Laboratory of Network System Architecture and Convergence under Grant 2017BKL-NSAC-ZJ-04, the Beijing Natural Science Foundation under Grant L172032, the BUPT Excellent Ph.D. Students Foundation, and in part by the U.S. National Science Foundation under Grant CNS-1460316.}}
 }

\maketitle

\vspace{0cm}
\begin{abstract}

In this paper, the problem of resource management is studied for a network of wireless virtual reality (VR) users communicating over {small cell networks} (SCNs). In order to capture the VR users' quality-of-service (QoS) in SCNs, a novel VR model, based on multi-attribute utility theory, is proposed. This model jointly accounts for VR metrics such as tracking accuracy, processing delay, and transmission delay. In this model, the small base stations (SBSs) act as the VR control centers that collect the tracking information from VR users over the cellular uplink. Once this information is collected, the SBSs will then send the three-dimensional images and accompanying audio to the VR users over the downlink. Therefore, the resource allocation problem in VR wireless networks must jointly consider both the uplink and downlink. This problem is then formulated as a noncooperative game and a distributed algorithm based on the machine learning framework of echo state networks (ESNs) is proposed to find the solution of this game. The proposed ESN algorithm enables the SBSs to predict the VR QoS of each SBS and is guaranteed to converge to a mixed-strategy Nash equilibrium. The analytical result shows that each user's VR QoS jointly depends on both VR tracking accuracy and wireless resource allocation. Simulation results show that the proposed algorithm yields significant gains, in terms of VR QoS utility, that reach up to {22.2\%} and {37.5\%}, respectively, compared to Q-learning and a baseline proportional fair algorithm. The results also show that the proposed algorithm has a faster convergence time than Q-learning and can guarantee low delays for VR services.
\end{abstract}

{\renewcommand{\thefootnote}{\fnsymbol{footnote}}
\footnotetext{A preliminary version of this work was published in the IEEE GLOBECOM conference \cite{QoSmodelR}.}}

\section{Introduction}
Virtual reality (VR) services will enable users to experience and interact with virtual and immersive environments through a first-person view \cite{rosedale2017virtual}. 
For instance, individuals can use a VR device to walk around in a fully immersive world and travel to any destination, within the confines of their own home.
{Compared to a static high definition (HD) video, a VR video is generated based on the users' movement such as head and eye movements. Therefore, generating a VR video requires tracking information related to the users' interactions with the VR environment. In consequence, the tracking accuracy and the delay of tracking information transmission will significantly affect the creation and transmission of VR videos hence affecting the users' experience. Such tracking delay does not exist for regular HD videos and, thus, constitutes a fundamental difference between VR and HD videos.}
If VR devices such as HTC Vive \cite{htc} rely on wired connections to a VR control center, such as a computer, for processing their information, then the users will be significantly restricted in the type of actions that they can take and the VR applications that they can experience.
To enable truly immersive VR applications, one must deploy VR systems \cite{bacstuug2016towards} over wireless cellular networks. 
In particular, VR systems can use the wireless connectivity of small cell networks (SCNs) \cite{bacstuug2016towards} in which small cell base stations (SBSs) can act as VR control centers that directly connect to the VR devices over wireless links, collect the tracking\footnote{{Here, tracking pertains to the fact that the immersive VR applications must continuously collect a very accurate localization of each user including the positions, orientation, and eye movement (i.e., gaze tracking).}} information from the VR devices, and send the VR images to the VR devices over wireless links. However, operating VR devices over SCNs faces many challenges in terms of tracking, low delay {(typically less than 20 ms)}, and high data rate {(typically over 25 Mbps)} \cite{bacstuug2016towards}.
\subsection{Related Work}

The existing literature has studied a number of problems related to VR such as in \cite{bacstuug2016towards,rosedale2017virtual,6849494, 7358172,qian2016optimizing,richardt2013megastereo,sridhar2016real,rovira2017reinforcement,Pfeiffer,lo2017performance,abari2016cutting}. The authors in  \cite{rosedale2017virtual} and \cite{bacstuug2016towards} provided qualitative surveys that motivate the deployment of VR over wireless systems, but these works do not provide any mathematically rigorous modeling. 
{In \cite{6849494}. the authors proposed a distortion-aware concurrent multipath data transfer algorithm to minimize the end-to-end video distortion. The work in \cite{7358172} deveploped a modeling-based approach to optimize the high frame rate video transmission over wireless networks. However, the works in \cite{6849494} and \cite{7358172} only consider the video content transmission which cannot be directly applied to the VR content transmission since the VR contents are generated based on the users' tracking information.}
In \cite{qian2016optimizing}, the authors proposed a streaming scheme that delivers only the visible portion of a $360^\circ$ video based on head movement prediction. Meanwhile, the authors in \cite{richardt2013megastereo} developed an algorithm for generating high-quality stereo panoramas. The work in \cite{sridhar2016real} proposed a real-time solution that uses a single commodity RGB-D camera to track hand manipulation. In \cite{rovira2017reinforcement}, a reinforcement learning algorithm is proposed to guide a user's movement within a VR immersive environment. The authors in \cite{Pfeiffer} proposed an approach based on the three-dimensional
 (3D) heat maps to address the delay challenges. The work in \cite{lo2017performance} designed several experiments for quantifying the performance of tile-based $360^\circ$ video streaming over a real cellular network. In \cite{abari2016cutting}, the authors performed a WiFi experiment for wireless VR for a single user within a single room. 
However, beyond the survey in \cite{bacstuug2016towards}, which motivated the use of VR over wireless and the WiFi experiment for a single VR user in \cite{abari2016cutting}, the majority of existing VR works in \cite{rosedale2017virtual,qian2016optimizing,richardt2013megastereo,sridhar2016real,rovira2017reinforcement,Pfeiffer,lo2017performance} focus on VR systems that are deployed over wired networks and, as such, they do not capture any challenges of deploying VR over wireless SCNs. Moreover, most of these existing works \cite{bacstuug2016towards,rosedale2017virtual,qian2016optimizing,richardt2013megastereo,sridhar2016real,rovira2017reinforcement,Pfeiffer,lo2017performance,abari2016cutting} focus only on improving a single VR quality-of-service (QoS) metric such as tracking or generation of 3D images. Indeed, this prior art does not develop any VR-specific model that can capture all factors of VR QoS and, hence, these existing works fall short in addressing the challenges of optimizing VR QoS for wireless users.

\subsection{Main Contributions} 
The main contribution of this paper is a novel framework for enabling wireless cellular networks to integrate VR applications and services. To our best knowledge, \emph{this is the first work that develops a comprehensive framework for analyzing the performance of VR services over cellular networks.} Our main contribution include:
\begin{itemize}
\item We propose a novel VR model based on \emph{multi-attribute utility theory} \cite{abbas2010constructing}, to jointly capture the tracking accuracy, transmission delay, and processing delay thus effectively quantifying the VR QoS for all wireless users. In this VR model, the tracking information is transmitted from the VR users to the SBSs over the cellular uplink while the VR images are transmitted in the downlink from the SBSs to their users. 

\item We analyze resource (resource blocks) allocation \emph{jointly} over the uplink and downlink. We formulate the problem as a noncooperative game in which the players are the SBSs. Each player seeks to find an optimal spectrum allocation scheme to optimize a utility function that captures the VR QoS. 
\item To solve this VR resource management game, we propose a learning algorithm based on echo state networks (ESNs) \cite{chen2016caching} that can predict the VR QoS value resulting from resource allocation and reach a mixed-strategy Nash equilibrium (NE). { Compared to existing learning algorithms \cite{chekroun2015distributed,bennis2011distributed,zhu2011eavesdropping}, the proposed algorithm has lower complexity, requires less information due to its neural network nature, and is guaranteed to converge to a mixed-strategy NE.}  
One unique feature of this algorithm is that, after training, it can use the stored ESN information to effectively find an optimal converging path to a mixed-strategy NE.
\item We perform fundamental analysis on the gains and tradeoffs that stem from changing the number of uplink and downlink resource blocks for each user. This analytical result shows that, in order to improve the VR QoS of each user, we can improve the tracking system or increase the number of the resource blocks allocated to each user according to each user's specific state. 
\item Simulation results show that the proposed algorithm can yield, respectively, {22.2\%} and {37.5\%} gains in terms of VR QoS utility compared to Q-learning and proportional fair algorithm. The results also show that the proposed algorithm significantly improves the convergence time of up to {24.2\%} compared to Q-learning.  
\end{itemize}


 
 The rest of this paper is organized as follows. The problem is formulated in Section \uppercase\expandafter{\romannumeral2}. The resource allocation algorithm is proposed in Section \uppercase\expandafter{\romannumeral3}. In Section \uppercase\expandafter{\romannumeral4}, simulation results are analyzed. Finally, conclusions are drawn in Section \uppercase\expandafter{\romannumeral5}.               

\section{System Model and Problem Formulation}\label{se:system}

%

Consider the downlink transmission of a cloud-based small cell network (SCN)\footnote{{Since next-generation cellular networks will all rely on a small cell architecture \cite{6824752}, we consider SCNs as the basis for our analysis. However, the proposed VR model and algorithm can also be used for any other type of cellular networks.}}
 servicing a set $\mathcal{U}$ of $V$ wireless VR users via a set $\mathcal{B}$ of $B$ SBSs \cite{7412759}. Here, we focus on entertainment VR application such as watching immersive videos and playing immersive games \cite{qian2016optimizing}. 
VR allows users 
to be immersed in a virtual environment within which they can experience a 3D and high-resolution $360^\circ$ vision with 3D surround stereo. In particular, immersive VR will provide a $360^\circ$ panoramic image for each eye of a VR user. Compared to a conventional $120^\circ$ image, a $360^\circ$ panoramic image enables a VR user to have a surrounded vision without any dead spots. However, a $360^\circ$ VR image needs more pixels than a traditional two-dimensional image, and, hence, VR transmission will require more stringent data rate \cite{bacstuug2016towards} and delay (less than 20 ms) requirements than traditional multimedia services. Moreover, for an HD video, we only need to consider the  video transmission delay as part of the QoS. In contrast, for a VR video, we need to jointly consider the video transmission delay, tracking information transmission delay, and the delay of generating the VR videos based on the tracking information. {Note that, for VR applications, the transmission delay and processing delay will directly determine how each VR video is transmitted and, thus, they are a key determinant of the VR video quality.}


\begin{table*}\footnotesize
  \newcommand{\tabincell}[2]{\begin{tabular}{@{}#1@{}}#2\end{tabular}}
\renewcommand\arraystretch{0.9}
 \caption{
    \vspace*{-0.3em}List of notations}\vspace*{-1em}
\centering  
\begin{tabular}{|c||c|c||c|}
\hline
Notation & Description & Notation & Description\\
\hline
$V$ & Number of users & $S^\textrm{d}, S^\textrm{u}$ & Number of downlink and uplink resource blocks\\
\hline
$B$ & Number of SBSs & $\mathcal{S}^\textrm{d}, \mathcal{S}^\textrm{u}$ & Sets of downlink and uplink resource blocks \\
\hline
$\boldsymbol{W}_\textrm{in}$& Input weight matrix  & $D_{ij}^\textrm{T}$ & Transmission delay between user $i$ and SBS $j$ \\
\hline
 $P_B$ & Transmit power of SBSs& $d_{ij}$& Distance between user $i$ and SBS $j$\\
\hline
${\mathcal{A}_j}$ & Set of SBS $j$'s actions &  $c_{ij}$ & Data rate of user $i$ associated with SBS $j$ \\
\hline
$\boldsymbol{a}_j$ & One action of SBS $j$ & $\boldsymbol{s}_{ij}^\textrm{d},\boldsymbol{s}_{ij}^\textrm{u} $ & Resource allocation vector\\
\hline
$\boldsymbol{a}_{ji}$ & Action $i$ of SBS $j$ & $\boldsymbol{d}_{j}$ & Downlink resource allocation of SBS $j$ \\
\hline
$\boldsymbol{W}_\textrm{out}$& Output weight matrix & $\bar u_j$ & Average utility function of SBS $j$ \\
\hline
$N_w$ & Number of reservoir units & $\hat u_j$ & Utility function of SBS $j$\\
\hline
$K_i$ & Tracking accuracy& $\gamma_{K}$ & Maximal tracking inaccuracy\\
\hline
$D_i$& Total delay of user $i$&$\boldsymbol{v}_{j}$ & Uplink resource allocation of SBS $j$\\
\hline
  $\boldsymbol{\chi}$& Vector of user's localization & \footnotesize{$U_i\left( {{D_i},{K_i}} \right)$} & Total utility function of user $i$'s VR QoS \\
\hline
$\lambda $ & Learning rate & $U_i\left( {{D_i}\left| {{K_i}} \right.} \right)$ & Conditional utility function of user $i$'s VR QoS\\
\hline
 $D_i^\textrm{P}$ & Processing delay of user $i$ & $V_j$& Number of users associated with SBS $j$ \\
\hline
$\boldsymbol{W}$ & Reservoir weight matrix & $\left|{\mathcal{A}_j}\right|$ & Number of SBS $j$'s actions\\
\hline
$\boldsymbol{\mu}_j$& Reservoir state of SBS $j$ & $\boldsymbol{a}_{-j}$ & Actions of all SBSs other than SBS $j$ \\
\hline
$A$& Data size of tracking vector & $L$ & Maximum Data size of each VR image \\
\hline
\end{tabular}
\vspace{-0.4cm}
\end{table*}  

\begin{figure}[!t]
  \begin{center}
   \vspace{0cm}
    \includegraphics[width=9cm]{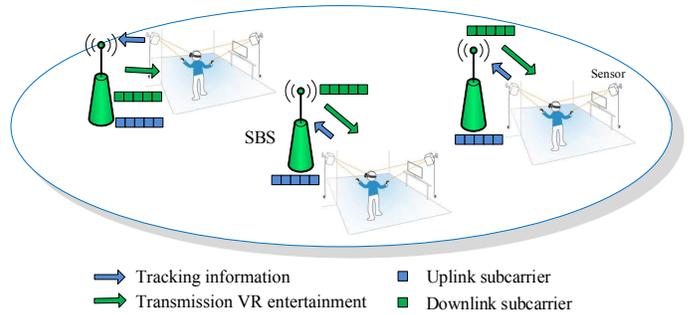}
    \vspace{-0.6cm}
    \caption{\label{figure2} A network of immersive VR application.}
  \end{center}\vspace{-0.7cm}
\end{figure}

The SBSs adopt an orthogonal frequency division multiple access (OFDMA) technique and transmit over a set of uplink resource blocks  $\mathcal{S}^\textrm{u}$ and a set of downlink resource blocks $\mathcal{S}^\textrm{d}$, as shown in Fig. \ref{figure2}. The uplink resource blocks are used to transmit the data that is collected by the VR sensors placed at a VR user's headset or near the VR user while the downlink resource blocks are used to transmit the image displayed on each user's VR device. We define the coverage of each SBS as a circular area of radius $r_B$ and we assume that each SBS will only allocate resource blocks to the users located in its coverage range. Table \uppercase\expandafter{\romannumeral1} provides a summary of the notations used hereinafter.

\subsection{VR Model}
For our VR model, we consider delay and tracking accuracy as the main VR QoS metrics of interest. Based on the accurate localization of each user, the SBS can build the immersive and virtual environment for each user. Among the components of the VR QoS, a delay metric can be defined to capture two key VR service requirements: high data rate and low delay. Next, we will explicitly discuss all the components of the considered VR QoS metrics. 
\subsubsection{Tracking Model}          
For any VR QoS metric, tracking consists of position tracking and orientation tracking \cite{qian2016optimizing}. VR tracking directly affects the construction of the users' virtual environment since the SBSs need to use each user's localization information to construct the virtual environment. Hereinafter, we use the term ``localization {information}'' to represent the information related to the user's location and orientation. The localization data of each user is used as the primary component of tracking \cite{bacstuug2016towards}.  The tracking vector of each user $i$ is $\boldsymbol{\chi}_i={\left[\chi_{i1},\chi_{i2},\chi_{i3}, \chi_{i4}, \chi_{i5}, \chi_{i6}\right]}$, where $\left[\chi_{i1},\chi_{i2},\chi_{i3}\right]$ represents the position of each VR user while $\left[ \chi_{i4}, \chi_{i5}, \chi_{i6} \right]$ represents the orientation of each user. Here, we note that the position and orientation of each user are determined by the SBS via the information collected by the sensors. For each VR user $i$, the tracking accuracy $K_i\left(\boldsymbol{s}_{ij}^\textrm{u}\right)$ can be given by:
{\begin{equation}\label{eq:Ri}
K_i\left(\boldsymbol{s}_{ij}^\textrm{u}\right)=
1-\frac{{\left\| {{\boldsymbol{\chi} _i}\left(\boldsymbol{ \gamma}_i^\textrm{u}\left({\boldsymbol{s}_{ij}^\textrm{u}} \right)\right) - {\boldsymbol{\chi} _i^R}} \right\|}}{{\mathop {\max }\limits_{\boldsymbol{s}_{ij}^\textrm{u}} \left\| {{\boldsymbol{\chi} _i}\left( \boldsymbol{ \gamma}_i^\textrm{u}\left({\boldsymbol{s}_{ij}^\textrm{u}}\right) \right) - {\boldsymbol{\chi} _i^R}} \right\|}},
\end{equation}
where ${\boldsymbol{\chi} _i}\left( \boldsymbol{ \gamma}_i^\textrm{u}\left({\boldsymbol{s}_{ij}^\textrm{u}} \right)\right)$ is the tracking vector transmitted from the VR headset to the SBS over wireless links that depends on the signal-to-interference-plus-noise (SINR) ratio. ${\boldsymbol{s}_{ij}^\textrm{u}}= \left[ {s_{ij1}^\textrm{u}, \ldots ,s_{ij{S^\textrm{u}}}^\textrm{u}} \right]$ represents the vector of uplink resource blocks that SBS $j$ allocates to user $i$ with $s_{ijk}^\textrm{u} \in \left\{ {1,0} \right\}$. Here, the uplink (downlink) resource blocks are equally divided into ${S^\textrm{u}}$ (${S^\textrm{d}}$) groups. Hereinafter, the term \emph{resource block} refers to one of those groups. $s_{ijk}^\textrm{u}=1$ indicates that resource block $k$ is allocated to user $i$. $\boldsymbol{\gamma}_i^\textrm{u}\left({\boldsymbol{s}_{ij}^\textrm{u}} \right)= \left[ {\gamma\left(s_{ij1}^\textrm{u}\right), \ldots ,\gamma\left(s_{ij{S^\textrm{u}}}^\textrm{u}\right)} \right]$ represents the SINR of SBS $j$ with resource blocks ${\boldsymbol{s}_{ij}^\textrm{u}}$ {where  ${\gamma \left({s}_{ijk}^\textrm{u} \right)=\frac{{{P_U}{h_{ij}^k}}}{{{\sigma ^2} + \sum\limits_{l \in \mathcal{U}_k,l \ne i} {{P_U}{h_{il}^k}} }}}$ is the SINR between user $i$ and SBS $j$ over resource block $k$. Here, $\mathcal{U}_k$ is the set of users that use uplink resource block $k$. $P_U$ is the transmit power of user $i$ (assumed to be equal for all users), $\sigma^2$ is the variance of the Gaussian noise and $h_{ij,\tau}^k=g_{ij,\tau}^kd_{ij}^{-\beta}$ is the path loss between user $i$ and SBS $j$ over resource block $k$ with $g_{ij,\tau}^k$ being the Rayleigh fading parameter, $d_{ij}$ the distance between user $i$ and SBS $j$, and $\beta$ the path loss exponent.}  In ${\boldsymbol{\chi} _{i}}\left( \boldsymbol{ \gamma}_i^\textrm{u}\left({\boldsymbol{s}_{ij,\tau}^\textrm{u}} \right)\right)$, the resource block vector $\boldsymbol{s}_{ij}^\textrm{u}$ determines the uplink SINR of the tracking vector's transmission. 
The tracking vector will be subject to bit errors and, hence, it will depend on the SINR and the resource block vector $\boldsymbol{s}_{ij}^\textrm{u}$. 
${\left\| {{\boldsymbol{\chi} _{i}}\left(\boldsymbol{ \gamma}_i^\textrm{u}\left({\boldsymbol{s}_{ij}^\textrm{u}} \right)\right) - {\boldsymbol{\chi} _{i}^R}} \right\|}$ represents the wireless tracking inaccuracy. ${\boldsymbol{\chi} _{i}^R}$ is obtained from the users' force feedback and transmitted via a dedicated wireless channel. Force feedback \cite{burdea2003virtual} represents the feedback that the users send to the SBSs whenever they are not satisfied with the displayed VR image. 
 {${\boldsymbol{\chi} _{i}^R}$ is only transmitted from the users to the SBSs when the users feel uncomfortable in the environment. Since ${\boldsymbol{\chi} _{i}^R}$ is not transmitted every time slot, the SBSs can use orthogonal resource blocks over a dedicated channel to transmit ${\boldsymbol{\chi} _{i}^R}$.} Thus, ${\boldsymbol{\chi} _{i}^R}$ is generally much more accurate than ${\boldsymbol{\chi} _{i}}\left( \boldsymbol{ \gamma}_i^\textrm{u}\left({\boldsymbol{s}_{ij}^\textrm{u}} \right)\right)$.
Note that, (\ref{eq:Ri}) is based on normalized root mean square errors \cite{APractical}, which is a popular measure of the difference between two datasets. From (\ref{eq:Ri}), we can see that, {when the SINR increases, the bit error rate will decrease and, hence, ${\left\| {{\boldsymbol{\chi} _{i}}\left(\boldsymbol{ \gamma}_i^\textrm{u}\left({\boldsymbol{s}_{ij}^\textrm{u}} \right)\right) - {\boldsymbol{\chi} _{i}^R}} \right\|}$ decreases and the tracking accuracy improves.}

\subsubsection{Delay} Next, we define the delay component that consists of the transmission delay and processing (and computing) delay. The transmission delay of each user $i$ will be:
\begin{equation}\label{eq:dt}
{
D_{ij}^\textrm{T}\left(\boldsymbol{s}_{ij}^\textrm{d},\boldsymbol{s}_{ij}^\textrm{u}\right) = \frac{{{L}}}{{{c_{ij}\left(\boldsymbol{s}_{ij}^\textrm{d}\right)}}}+ \frac{{{A}}}{{{c_{ij}\left(\boldsymbol{s}_{ij}^\textrm{u}\right)}}},}
\end{equation}
where $L$ is the {maximum} size of VR image that each SBS needs to transmit to its users, {$A$ is the size of the tracking vector that each user needs to transmit to the associated SBS, and ${c_{ij}\left(\boldsymbol{s}_{ij}^\textrm{d}\right)} = \sum\limits_{k = 1}^{{S^\textrm{d}}} {s_{ijk}^\textrm{d}B_\textrm{R}{{\log }_2}\left( {1 + \gamma\left(s_{ijk}^\textrm{d}\right)} \right)}$ is the downlink rate of user $i$. ${c_{ij}\left(\boldsymbol{s}_{ij}^\textrm{u}\right)\!=\!  \sum\limits_{k = 1}^{{S^\textrm{u}}} {s_{ijk}^\textrm{u}B_\textrm{R}{{\log }_2}\left( {1\! +\! \gamma\left(s_{ijk}^\textrm{u} \right)} \right)}}$ is the uplink rate of user $i$.} $\boldsymbol{s}_{ij}^\textrm{d}\! =\! \left[ {s_{ij1}^\textrm{d}, \ldots ,s_{ij{S^\textrm{d}}}^\textrm{d}} \right]$ is the vector of resource blocks that SBS $j$ allocates to user $i$ with $s_{ijk}^\textrm{d} \in \left\{ {0,1} \right\}$. $s_{ijk}^\textrm{d}=1$ indicates that resource block $k$ is allocated to user $i$. ${\gamma \left(s_{ijk}^\textrm{d} \right)=\frac{{{P_B}{h_{ij}^k}}}{{{\sigma ^2} + \sum\limits_{l \in \mathcal{B}_k,l \ne j} {{P_B}{h_{il}^k}} }}}$ is the signal-to-interference-plus-noise ratio between user $i$ and SBS $j$ over resource block $k$ with $\mathcal{B}_k$ being the set of the SBSs that use downlink resource block $k$. $B_\textrm{R}$ is the bandwidth of each resource block, $P_B$ is the transmit power of SBS $j$ (assumed to be equal for all SBSs). 

In the VR QoS, the \emph{processing delay} primarily stems from the tracking accuracy.
{To properly capture the processing delay, we define a vector $\boldsymbol{l}\left({\boldsymbol{\chi} _i}\left( \boldsymbol{ \gamma}_i^\textrm{u}\left({\boldsymbol{s}_{ij}^\textrm{u}} \right)\right)\right)$ that represents a VR image of user $i$ constructed by the tracking vector ${\boldsymbol{\chi} _i}\left( \boldsymbol{ \gamma}_i^\textrm{u}\left({\boldsymbol{s}_{ij}^\textrm{u}} \right)\right)$. 
{$\boldsymbol{l}\left({\boldsymbol{\chi} _i}\left( \boldsymbol{ \gamma}_i^\textrm{u}\left({\boldsymbol{s}_{ij}^\textrm{u}} \right)\right)\right)$ will be typically generated before the SBSs receive the force feedback tracking vector ${\boldsymbol{\chi} _i^R}$, as the VR system will use the historical tracking information to predict the future tracking vector. When the SBSs receive ${\boldsymbol{\chi} _i^R}$, they must construct $\boldsymbol{l}\left({\boldsymbol{\chi} _i^R}\right)$ based on ${\boldsymbol{\chi} _i^R}$. The simplest way of constructing the VR image $\boldsymbol{l}\left({\boldsymbol{\chi} _i^R}\right)$ is to correct it from $\boldsymbol{l}\left({\boldsymbol{\chi} _i}\left( \boldsymbol{ \gamma}_i^\textrm{u}\left({\boldsymbol{s}_{ij}^\textrm{u}} \right)\right)\right)$ to $\boldsymbol{l}\left({\boldsymbol{\chi} _i^R}\right)$.  }   
The processing delay will then represent the time that an SBS must spend to change the VR image from $\boldsymbol{l}\left({\boldsymbol{\chi} _i}\left( \boldsymbol{ \gamma}_i^\textrm{u}\left({\boldsymbol{s}_{ij}^\textrm{u}} \right)\right)\right)$ to $\boldsymbol{l}\left({\boldsymbol{\chi} _i^R}\right)$.   
The bits that the SBSs use to change the VR image from $\boldsymbol{l}\left({\boldsymbol{\chi} _i}\left( \boldsymbol{ \gamma}_i^\textrm{u}\left({\boldsymbol{s}_{ij}^\textrm{u}} \right)\right)\right)$ to $\boldsymbol{l}\left({\boldsymbol{\chi} _i^R}\right)$ can be calculated by using the motion search algorithm \cite{yang2002computation}. The motion search algorithm can find the different pixels between $\boldsymbol{l}\left({\boldsymbol{\chi} _i}\left( \boldsymbol{ \gamma}_i^\textrm{u}\left({\boldsymbol{s}_{ij}^\textrm{u}} \right)\right)\right)$ and $\boldsymbol{l}\left({\boldsymbol{\chi} _i^R}\right)$ and, hence, an SBS can directly replace those different pixels \cite{yang2002computation}. 
In our model, the VR sensors can accurately collect the VR users' movement \cite{HTCsupport}. Moreover, if the SBSs receive accurate tracking data, they can accurately extract each user's location and orientation. Hence, the tracking accuracy depends only on the data error that is caused by the uplink transmission over the wireless link. To compute the processing delay, we first assume that the total amount of computational resources available at each SBS is $M$ which is equally allocated to the associated users. $M$ represents the number of bits that can be processed by each SBS which is determined by each SBS's central processing unit (CPU). Then, the processing delay can be given by:
\begin{equation}\label{eq:dT}
D_i^\textrm{P}\left( {K_i\left(\boldsymbol{s}_{ij}^\textrm{u}\right) } \right)= {\frac{\upsilon\left({{\boldsymbol{l}\left({\boldsymbol{\chi} _i}\left( \boldsymbol{ \gamma}_i^\textrm{u}\left({\boldsymbol{s}_{ij}^\textrm{u}} \right)\right)\right), \boldsymbol{l}\left({\boldsymbol{\chi} _i^R}\right)} } \right)}{{{M \mathord{\left/
 {\vphantom {M {{N_j}}}} \right.
 \kern-\nulldelimiterspace} {{N_j}}}}}},
\end{equation}
 where $0 \le \upsilon\left({{\boldsymbol{l}\left({\boldsymbol{\chi} _i}\left( \boldsymbol{ \gamma}_i^\textrm{u}\left({\boldsymbol{s}_{ij}^\textrm{u}} \right)\right)\right), \boldsymbol{l}\left({\boldsymbol{\chi} _i^R}\right)} } \right)\le L$ represents the number of bits that must be changed when SBS $j$ transforms the VR image from $\boldsymbol{l}\left({\boldsymbol{\chi} _i}\left( \boldsymbol{ \gamma}_i^\textrm{u}\left({\boldsymbol{s}_{ij}^\textrm{u}} \right)\right)\right)$ to $\boldsymbol{l}\left({\boldsymbol{\chi} _i^R}\right)$. $v\left(  \cdot \right)$ depends on the image size, the number of bits used to store a pixel, and the content of the VR image and is the result of the motion search algorithm. Here, we adopt the motion search algorithm of \cite{wei2017qos} to determine $\upsilon\left({{\boldsymbol{l}\left({\boldsymbol{\chi} _i}\left( \boldsymbol{ \gamma}_i^\textrm{u}\left({\boldsymbol{s}_{ij}^\textrm{u}} \right)\right)\right), \boldsymbol{l}\left({\boldsymbol{\chi} _i^R}\right)} } \right)$. {When the SINR ${\boldsymbol{\chi} _i}\left( \boldsymbol{ \gamma}_i^\textrm{u}\left({\boldsymbol{s}_{ij}^\textrm{u}} \right)\right)$ of user $i$ increases, the bit errors in ${\boldsymbol{\chi} _i}\left( \boldsymbol{ \gamma}_i^\textrm{u}\left({\boldsymbol{s}_{ij}^\textrm{u}} \right)\right)$ will decrease. In consequence, $\upsilon\left({{\boldsymbol{l}\left({\boldsymbol{\chi} _i}\left( \boldsymbol{ \gamma}_i^\textrm{u}\left({\boldsymbol{s}_{ij}^\textrm{u}} \right)\right)\right), \boldsymbol{l}\left({\boldsymbol{\chi} _i^R}\right)} } \right)$ will decrease and, hence, the processing delay decreases. Here, the SINR ${\boldsymbol{\chi} _i}\left( \boldsymbol{ \gamma}_i^\textrm{u}\left({\boldsymbol{s}_{ij}^\textrm{u}} \right)\right)$ of user $i$ depends on the resource blocks ${\boldsymbol{s}_{ij}^\textrm{u}} $ allocated to user $i$. } When the deviation between $\boldsymbol{l}\left({\boldsymbol{\chi} _i}\left( \boldsymbol{ \gamma}_i^\textrm{u}\left({\boldsymbol{s}_{ij}^\textrm{u}} \right)\right)\right)$ and $\boldsymbol{l}\left({\boldsymbol{\chi} _i^R}\right)$ increases, more data is needed to correct the image. $ \upsilon\left({{\boldsymbol{l}\left({\boldsymbol{\chi} _i}\left( \boldsymbol{ \gamma}_i^\textrm{u}\left({\boldsymbol{s}_{ij}^\textrm{u}} \right)\right)\right), \boldsymbol{l}\left({\boldsymbol{\chi} _i^R}\right)} } \right) \le L$ is a constraint that captures the maximum number of bits that must be corrected. $N_j$ is the number of the users associated with SBS $j$ and $\frac{M}{N_j}$ is the computation resources allocated to any user $i$'s VR session. From (\ref{eq:dT}), we can see that the processing delay depends on the tracking accuracy, the number of the users associated with SBS $j$, and the resource blocks allocated to user $i$.} The total delay of each user $i$ will hence be:
 \begin{equation}\label{eq:totaldelay}
 D_i\left(\boldsymbol{s}_{ij}^\textrm{d},\boldsymbol{s}_{ij}^\textrm{u}\right)=D_i^\textrm{P}\left( {K_i\left(\boldsymbol{s}_{ij}^\textrm{u}\right) } \right)+D_i^\textrm{T}\left(\boldsymbol{s}_{ij}^\textrm{d},\boldsymbol{s}_{ij}^\textrm{u}\right).
 \end{equation} 

\subsection{Utility Function Model}
Next, we use the framework multi-attribute utility theory \cite{abbas2010constructing} to define a utility function that can effectively capture VR delay and tracking. Using multi-attribute utility theory, we construct a total utility function that jointly considers the delay and tracking of the VR QoS. Conventional techniques for defining a utility function, such as by directly summing up delay and tracking are only valid under the assumption that delay and tracking are independent and {that their relationship is linear \cite[Chapters 2 and 3]{montgomery2015introduction}. However, for VR, the relationship between the delay and tracking is not necessarily linear nor independent. Therefore, we use multi-attribute utility theory \cite{abbas2010constructing} to define the utility function.} The defined utility function can assign to each delay and tracking components of the VR QoS a unique utility value without any constraints. 

In order to construct the total utility function, we first define a conditional utility function for the delay component of the VR QoS \cite{abbas2010constructing}. In the total utility function, both tracking and delay will contribute to the utility value. In contrast, in the conditional utility function of delay, only delay contributes to the utility function and the tracking value is assumed to be given. The method that uses a conditional utility function  to derive the total utility function is analogous to the approach used in probability theory to derive a joint probability distribution from conditional probability distributions \cite{mood1950introduction}. For VR, the total utility function of user $i$ is given by $U_i\left( {{D_i}\left( {{\boldsymbol{s}_{ij}^\textrm{d},\boldsymbol{s}_{ij}^\textrm{u}}} \right),{K_i}\left( {\boldsymbol{s}_{ij}^\textrm{u}} \right)} \right)$. The conditional utility function of delay, $U_i\left( {{D_i}\left( {{\boldsymbol{s}_{ij}^\textrm{d},\boldsymbol{s}_{ij}^\textrm{u}}} \right)\left| {{K_i}\left( {\boldsymbol{s}_{ij}^\textrm{u}} \right)} \right.} \right)$, represents the total utility function given a certain value of the tracking accuracy. Based on \cite{abbas2010constructing}, the conditional utility function of delay for user $i$ can be given by:
 \begin{equation}\label{eq:DFF}
 {
\begin{split}
U_i\left( {{D_i}\!\left( {{\boldsymbol{s}_{ij}^\textrm{d},{{\boldsymbol{s}_{ij}^\textrm{u}}}}} \right)\left| {{K_i}\left( {\boldsymbol{s}_{ij}^\textrm{u}} \right)} \right.} \right)&\\
&\!\!\!\!\!\!\!\!\!\!\!\!\!\!\!\!\!\!\!\!\!\!\!\!\!\!\!\!\!\!\!\!\!\!\!\!\!\!\!\!\!\!\!\!\!\!\!\!\!\!\!\!\!\!\!\!\!\!\!\!\!\!\!\!\!\!\!\!=
\left\{ {\begin{array}{*{20}{c}}
{{\!\!\! \frac{{D_{\max ,i}\left( \!{{K_i}\left(\! {\boldsymbol{s}_{ij}^\textrm{u}} \right)}\right) - {D_i}\left( {{\boldsymbol{s}_{ij}^\textrm{d},{{K_i}\left( {\boldsymbol{s}_{ij}^\textrm{u}} \right)}}} \right)}}{{D_{\max ,i}\!\left( {{K_i}\left( {\boldsymbol{s}_{ij}^\textrm{u}} \right)}\right) - \gamma_{D}}}}, {D_i}\!\left( {{\boldsymbol{s}_{ij}^\textrm{d},{{K_i}\left( {\boldsymbol{s}_{ij}^\textrm{u}} \right)}}} \right) \ge\gamma_{D},}\\
{\;\;\;\;\;\;\;\;\;\;\;\;\;\;\;\;\;\;\;\;\;1,\;\;\;\;\;\;\;\;\;\;\;\;\;\;\;\;\;\;\;\;{D_i}\left( {{\boldsymbol{s}_{ij}^\textrm{d},{{K_i}\left( {\boldsymbol{s}_{ij}^\textrm{u}} \right)}}} \right) <\gamma_{D},}
\end{array}} \right.
\end{split}}
\end{equation}
where $\gamma_{D}$ is the maximal tolerable delay for each VR user (maximum supported by the VR system being used) and $D_{\max ,i}\left( \boldsymbol{s}_{ij}^\textrm{u}\right)=\mathop {\max }\limits_{\boldsymbol{s}_{ij}^\textrm{d}} \left( D_i\left( \boldsymbol{s}_{ij}^\textrm{d},{ {\boldsymbol{s}_{ij}^\textrm{u}} } \right)  \right)$ is the maximum delay of VR user $i$ given $\boldsymbol{s}_{ij}^\textrm{u}$. Here, $U_i\!\left(\! {{D_{\max,i}\left(\boldsymbol{s}_{ij}^\textrm{u}\right)}\!\left| {{K_i}\left( \boldsymbol{s}_{ij}^\textrm{u} \right)} \right.} \!\right)\!=\!0$ and $U_i\!\left( {\gamma_{D}\!\left| {{K_i}\!\left( \boldsymbol{s}_{ij}^\textrm{u} \right)} \right.} \!\right)\!=\!1$. {From (\ref{eq:DFF}), we can see that, when ${D_i}\left( {{\boldsymbol{s}_{ij}^\textrm{d},\boldsymbol{s}_{ij}^\textrm{u}}} \right) <\gamma_{D}$, the conditional utility value will remain 1. This is due to the fact that, when delay ${D_i}\left( {{\boldsymbol{s}_{ij}^\textrm{d},\boldsymbol{s}_{ij}^\textrm{u}}} \right)$ reaches the delay requirement $\gamma_D$, the utility value will reach its maximum of $1$.} Since delay and tracking are both dominant components, we can construct the total utility function, $U_i\left( {{D_i}\left( {{\boldsymbol{s}_{ij}^\textrm{d},\boldsymbol{s}_{ij}^\textrm{u}}} \right),{K_i}\left( \boldsymbol{s}_{ij}^\textrm{u} \right)} \right)$, that jointly considers the delay and tracking based on \cite{abbas2010constructing}. Here, a \emph{dominant component} represents the component that will minimize the total utility function regardless of the value of other components. For VR QoS, delay and tracking are both dominant components. For example, the VR QoS will be minimized when the value of the delay function is at a minimum regardless of the value of tracking accuracy. 
The use of dominant components such as delay and tracking will simplify the formulation of the total utility function \cite{abbas2010constructing}. Therefore, the total utility function of tracking and delay will be \cite{abbas2010constructing}:
\begin{equation}\small\label{eq:Ut}
 {
\begin{split}
&U_i\!\left(\! {{D_i}\!\left(\! {{\boldsymbol{s}_{ij}^\textrm{d},\boldsymbol{s}_{ij}^\textrm{u}}}\! \right)\!,{K_i}\!\left(\! \boldsymbol{s}_{ij}^\textrm{u}\! \right)} \!\right)\!=\! U_i\!\!\left(\! {{D_i}\!\left( \!{{\boldsymbol{s}_{ij}^\textrm{d},\boldsymbol{s}_{ij}^\textrm{u}}} \!\right)\!\left| {{K_i}\!\left(\! {\boldsymbol{s}_{ij}^\textrm{u}}\! \right)} \right.} \!\right)\!U_i\!\left(\! {{K_i}\!\left( \boldsymbol{s}_{ij}^\textrm{u} \right)}\! \right),\\
&\!\!\!\mathop  = \limits^{\left( a \right)} \!\!\left(\!\!1\!-\!\frac{{\left\| {{\boldsymbol{\chi} _i}\left(\boldsymbol{ \gamma}_i^\textrm{u}\left({\boldsymbol{s}_{ij}^\textrm{u}} \right)\right) - {\boldsymbol{\chi} _i^R}} \right\|}}{{\mathop {\max }\limits_{\boldsymbol{s}_{ij}^\textrm{u}}\! \left\| {{\boldsymbol{\chi} _i}\!\!\left( \boldsymbol{ \gamma}_i^\textrm{u}\!\left({\boldsymbol{s}_{ij}^\textrm{u}}\right)\! \right) \!\!-\! \!{\boldsymbol{\chi} _i^R}} \right\|}}\!\!\right)\!\! \frac{{D_{\max ,i}\!\left(\! {{K_i}\!\left( \!{\boldsymbol{s}_{ij}^\textrm{u}} \!\right)}\!\right) \!\!-\! \!{D_i}\!\left( \!{{\boldsymbol{s}_{ij}^\textrm{d},\!{{K_i}\!\left( \!{\boldsymbol{s}_{ij}^\textrm{u}} \right)}}}\! \right)}}{{D_{\max ,i}\!\left( {{K_i}\left( {\boldsymbol{s}_{ij}^\textrm{u}} \right)}\right) - \gamma_{D}}}.
\end{split}}
\end{equation}
where {$U_i\left( {{K_i}\left( \boldsymbol{s}_{ij}^\textrm{u} \right)} \right)$ is the utility function of tracking accuracy.} (a) is obtained by substituting (\ref{eq:Ri}) and (\ref{eq:DFF}) into (\ref{eq:Ut}). From (\ref{eq:Ut}), we can see that, the vector of resource blocks allocated to user $i$ for data transmission, $\boldsymbol{s}_{ij}^\textrm{d}$, and the resource blocks allocated to user $i$ for obtaining the tracking information, $\boldsymbol{s}_{ij}^\textrm{u}$, jointly determine the value of the total utility function. Moreover, this total utility function can assign a unique value to each tracking and delay component of the VR QoS. 

\subsection{Problem Formulation}
 Our goal is to develop an effective resource allocation scheme that allocates resource blocks so as to maximize the users' VR QoS. {This maximization jointly considers the coupled problems of user association, uplink resource allocation, and downlink resource allocation. {This optimization problem can formalized as follows:
 \addtocounter{equation}{0}
\begin{equation}\label{eq:max}
\mathop {\max }\limits_{\boldsymbol{s}_{ij}^\textrm{d},\boldsymbol{s}_{ij}^\textrm{u},\mathcal{U}_j}\sum\limits_{t= 1}^{{T}}\sum\limits_{j \in \mathcal{B}}\sum\limits_{i \in \mathcal{U}_j }{ U_{it}\left( {{D_i}\left( {{\boldsymbol{s}_{ij}^\textrm{d},\boldsymbol{s}_{ij}^\textrm{u}}} \right),{K_i}\left( \boldsymbol{s}_{ij}^\textrm{u} \right)} \right)}
\end{equation}
\begin{align}\label{c1}
\setlength{\abovedisplayskip}{-20 pt}
\setlength{\belowdisplayskip}{-20 pt}
&\rm{s.\;t.}\scalebox{1}{$\;\;\;\;{\left| \mathcal{U}_j \right|} \leqslant  {V_j}, \;\;\;\;\; \forall j \in \mathcal{B},$}\tag{\theequation a}\\
&\scalebox{1}{$\;\;\;\;\;\;\;\;\;\; {{s_{ij,k}^\textrm{d}}}  \in \left\{0,1\right\},\;\;\;\forall i \in \mathcal{U},\forall j \in \mathcal{B}, $} \tag{\theequation b}\\
&\scalebox{1}{$\;\;\;\;\;\;\;\;\;\; {{s_{ij,k}^\textrm{u}}}  \in \left\{0,1\right\},\;\;\;\forall i \in \mathcal{U},\forall j \in \mathcal{B},$} \tag{\theequation c}
\end{align}
where $\mathcal{U}_j $ is the set of the VR users associated with SBS $j$, $\left| \mathcal{U}_j \right|$ is the number of the VR users associated with SBS $j$, ${V_j}$ is the number of the VR users located in the coverage of SBS $j$. $U_{it}$ is the utility value of user $i$ at time $t$. (1a) indicates that the number of VR users associated with SBS $j$ must not exceed the number of VR users located in the coverage of SBS $j$. (1b) and (1c) indicate that each downlink resource block ${s_{ij,k}^\textrm{d}}$ or uplink resource block ${s_{ij,k}^\textrm{u}}$ can be allocated to one VR user.}

In (\ref{eq:max}), the VR QoS of each SBS depends not only on its resource allocation scheme but also on the resource allocation decisions of other SBSs. Consequently, the use of centralized optimization for such a complex problem is not possible as it is largely intractable and yields significant overhead.} To overcome this challenge, we formulate a noncooperative game $\mathcal{G} = \left[ {\mathcal{B},\left\{ {{\mathcal{A}_j}} \right\}_{j \in \mathcal{B}},\left\{ {{u_j}} \right\}}_{j \in \mathcal{B}} \right]$ between the SBSs that can be implemented in a distributed way. Each player $j$ has a set ${\mathcal{A}_j} = \left\{ {\boldsymbol{a}_{j1}, \ldots, \boldsymbol{a}_{j{\left| {{\mathcal{A}_j}} \right|}}} \right\}$ of $\left| {{\mathcal{A}_j}} \right|$ actions. In this game, each action of SBS $j$, ${\boldsymbol{a}_j} = \left( {{\boldsymbol{d}_j},{\boldsymbol{v}_j}} \right)$ consists of: (i) downlink resource allocation vector, ${ \boldsymbol{d}_j} = \left[  {\boldsymbol{s}_{1j}^\textrm{d}, \ldots, \boldsymbol{s}_{V_jj}^\textrm{d}} \right]$. $ \boldsymbol{s}_{ij}^\textrm{d}=\left[s_{ij1}, \ldots, s_{ijS^\textrm{d}}\right]$ represents the vector of downlink resource blocks that SBS $j$ allocates to user $i$ and $s_{ijk} \in \left\{ 1,0\right\}$ with $\sum\limits_{i = 1}^{{V_j}} {s_{ijk}^d}=1$. $s_{ijk}=1$ indicating that channel $k$ is allocated to user $i$ and $s_{ijk}=0$ otherwise. $V_j$ is the number of all users in the coverage area of SBS $j$, and (ii) uplink resource allocation vector, ${ \boldsymbol{v}_j} = \left[  {\boldsymbol{s}_{1j}^\textrm{u}, \ldots, \boldsymbol{s}_{V_jj}^\textrm{u}} \right]$ with $\sum\limits_{i = 1}^{{V_j}} {s_{ijk}^\textrm{u}}=1$. ${\boldsymbol{a}} = \left( {{\boldsymbol{a}_{1}},{\boldsymbol{a}_{2}}, \ldots ,{\boldsymbol{a}_{B}}} \right) \in \mathcal{A}$, represents the action profile of all players and $ \mathcal{A} = \prod\nolimits_{j \in \mathcal{B}} {{\mathcal{A}_{j}}}$.
 
To maximize the VR QoS of each user, the utility function of each SBS $j$ can be given by: 
\begin{equation}\label{eq:u}
u_{j} \left(\boldsymbol{a}_{j}, {\boldsymbol{a}_{-j}}\right)=\sum\limits_{i = 1}^{{V_j}}{ U_i\left( {{D_i}\left( {{\boldsymbol{s}_{ij}^\textrm{d},\boldsymbol{s}_{ij}^\textrm{u}}} \right),{K_i}\left( \boldsymbol{s}_{ij}^\textrm{u} \right)} \right)},
\end{equation}                                                                 
where $\boldsymbol{a}_{j} \in {\mathcal{A}_{j}}$ is an action of SBS $j$ and $\boldsymbol{a}_{-j}$ denotes the action profile of all SBSs other than SBS $j$. Let ${\pi _{j,\boldsymbol{a}_{i}}} \!=\!\!\!\mathop {\lim }\limits_{T \to \infty } \!\!\frac{1}{T}\!\!\sum\limits_{t = 1}^T\!\mathds{1}_{\left\{ \boldsymbol{a}_{j}=\boldsymbol{a}_{ji} \right\}}\!\!= \Pr \left( {{\boldsymbol{a}_j} = {\boldsymbol{a}_{ji}}} \right)$ be the probability that SBS $j$ uses action $\boldsymbol{a}_{ji}$. Hence, ${\boldsymbol{\pi} _j}= \left[ {{\pi _{j,\boldsymbol{a}_{j1}}}, \ldots ,{\pi _{j,\boldsymbol{a}_{j\left| {{A_n}} \right|}}}} \right]$ will be a probability distribution for SBS $j$. We assume that the VR transmission is analyzed during a period that consists of $T$ time slots. Therefore, the average value of the utility function can be given by: 
\begin{equation}
\begin{split}
\bar u_{j} \left(\boldsymbol{a}_{j}, {\boldsymbol{a}_{-j}}\right)&=\mathop {\lim }\limits_{T \to \infty } \frac{1}{T}\sum\limits_{t = 1}^T u_{j} \left(\boldsymbol{a}_{j}, {\boldsymbol{a}_{-j}}\right)\\&=\sum\limits_{\boldsymbol{a} \in \mathcal{A}}\left( {{u_j}\left( {{\boldsymbol{a}_j},{\boldsymbol{a}_{ - j}}} \right)} \prod\limits_{k \in \mathcal{B}} {\pi _{k,\boldsymbol{a}_k}}\right).  
 \end{split}
\end{equation}

To solve the proposed resource allocation game, we use the concept of a mixed-strategy NE, formally \cite{han2012game}:
\begin{definition}\emph{(mixed-strategy Nash Equilibrium): A mixed strategy profile ${\boldsymbol{\pi} ^*} = \left( {\boldsymbol{\pi} _1^*, \ldots ,\boldsymbol{\pi} _{B}^*} \right) = \left( {\boldsymbol{\pi} _j^*,\boldsymbol{\pi}_{ - j}^*} \right)$ is a \emph{mixed-strategy Nash equilibrium} if, $\forall j \in \mathcal{B}$ and $\boldsymbol{\pi} _j$, we have:
\begin{equation}\label{eq:mNE}
{\bar u_j}\left( {\boldsymbol{\pi} _j^*,\boldsymbol{\pi} _{ - j}^*} \right) \ge {\bar u_j}\left( {{\boldsymbol{\pi} _j},\boldsymbol{\pi} _{ - j}^*} \right),
\end{equation}
where 
${\bar u_j}\left( {{\boldsymbol{\pi} _j},{\boldsymbol{\pi} _{ - j}}} \right) =\sum\limits_{{\boldsymbol{a}} \in {\mathcal{A}}} {{u_j}\left( {{\boldsymbol{a}}} \right)\prod\limits_{k \in \mathcal{B}} {{\pi _{k,{\boldsymbol{a}_k}}}}}$ is the expected utility of SBS $j$ selecting the mixed strategy $\boldsymbol{\pi} _j$.}  
\end{definition}

The mixed-strategy NE for the SBSs represents a solution at which each SBS $j$ can maximize the average VR QoS for its users, given the actions of its opponents. 

 \section{Echo State Networks for Self-Organizing Resource Allocation} \label{se:al}
Next, we introduce a learning algorithm used to solve the VR game and find its NE. Since we consider both uplink and downlink resource allocation, the number of actions will be much larger than conventional resource allocation scenarios that typically consider only uplink or downlink resource allocation. {Therefore, as the number of actions significantly increases, by using traditional game-theoretic algorithms such as fictitious play \cite{bacci2016game}, each SBS may not be able to collect all of the information used to calculate the average utility function. Moreover, using such conventional game-theoretic and learning techniques, the SBSs will typically need to re-run the entire steps of the algorithm to reach a mixed-strategy NE as the states of the users and network vary. Hence, the delay during the convergence process of such algorithms may not be able to satisfy the QoS requirement of a dynamic VR network.} To satisfy the QoS requirement for the VR transmission of each SBS, we propose a learning algorithm based on the powerful framework of \emph{echo state networks} (ESN) \cite{chen2017machine}. The proposed ESN-based learning algorithm enables each SBS to predict the value of VR QoS that results from each action and, hence, can reach a mixed-strategy NE without having to traverse all actions. Since the proposed algorithm can store the past ESN information, then, it can find an optimal convergence path from the initial state to a mixed-strategy NE. {Compared to our work in \cite{Chen2016Echo} that is based on two echo state network schemes, the proposed algorithm relies on only one echo state network, which reduces complexity.}
 Next, we introduce the components of an ESN-based learning algorithm and, then, we introduce its update process. 

\subsection{ESN Components}\label{se:alA}
The proposed ESN-based learning algorithm consists of five components: a) agents, b) inputs, c) ESN model d), actions, and e) output, defined as follows.

$\bullet$ \emph{Agent}: The agents in our ESN are the SBSs in the set $\mathcal{B}$. 

$\bullet$ \emph{Actions:} Each action $\boldsymbol{a}_j$ of SBS $j$ jointly considers the uplink and downlink resource blocks, 
which is given by:
\begin{equation}
\boldsymbol{a}_{j}=\left( {{\boldsymbol{d}_{j}},{\boldsymbol{v}_{j}}} \right)= {\left[ {\begin{array}{*{20}{c}}
{\boldsymbol{s}_{1j}^{\textrm{d}} \cdots \boldsymbol{s}_{{V_jj}}^{\textrm{d}}, \boldsymbol{s}_{1j}^{\textrm{u}} \cdots \boldsymbol{s}_{{V_jj}}^{\textrm{u}}}
\end{array}} \right]^\mathrm{T}}.
\end{equation}

In order to guarantee that any action always has a non-zero probability to be chosen, the $\varepsilon$-greedy exploration \cite{31} is adopted. This mechanism is responsible for selecting the actions that each SBS will perform during the learning process while harmonizing the tradeoff between exploitation and exploration. Therefore, the probability with which SBS $j$ chooses action $i$ will be given by: 
\begin{equation}\small\label{eq:p}
\pi _{j,\boldsymbol{a}_{j}} \!=\! \left\{ {\begin{array}{*{20}{c}}
{1 - \varepsilon  + \frac{\varepsilon }{{\left| {{\mathcal{A}_j}} \right|}},\;\; \arg\! \mathop {\max }\limits_{{\boldsymbol{a}_j} \in {\mathcal{A}_j}}\! {\hat{u}_{j}}\!\left( {{\boldsymbol{a}_j}} \right),}\\
{\;\;\;\;\;\;\;\frac{\varepsilon }{{\left| {{\mathcal{A}_j}} \right|}},\;\;\;\;\;\;\;\;\text{otherwise},\;\,\,\,\;\;\;\;\;\;\;\;\;}
\end{array}} \right. 
\end{equation}
where ${\hat u_{\tau,j}\left(\boldsymbol{a}_j\right)}=\!\!\!\!\!\sum\limits_{\boldsymbol{a_{-j}} \in \mathcal{A}_{-j}}\!\!\!\!\!\! {{u_j}\left( {{\boldsymbol{a}_{j}},{\boldsymbol{a}_{ - j}}} \right)} {\pi _{-j,\boldsymbol{a}_{-j}}}$ is the expected utility of an SBS $j$ with respect to the actions of its opponents, $\mathcal{A}_{-j} = \prod\nolimits_{k \ne j, k \in \mathcal{B} } {{\mathcal{A}_{k}}}$ is the set of actions other than the action of SBS $j$ and ${\pi _{ - j,{\boldsymbol{a}_{ - j}}}} = \prod\limits_{k \in \mathcal{B},k \ne j,} {\pi _{k,\boldsymbol{a}_{k}}} $. {${\hat u_{\tau,j}\left(\boldsymbol{a}_j\right)}$ is the marginal probability distribution over the action set of SBS $j$. { ${\hat u_{\tau,j}\left(\boldsymbol{a}_j\right)}$ is the average utility function over all SBSs other than SBS $j$ while $\bar u_{j} \left(\boldsymbol{a}_{j}, {\boldsymbol{a}_{-j}}\right)$ is the average utility value over all SBSs.} From (\ref{eq:p}), we can see that each SBS will assign the highest probability, $1 - \varepsilon  + \frac{\varepsilon }{{\left| {{\mathcal{A}_j}} \right|}}$, to the action that results in the maximum utility value, ${\hat u_{\tau,j}}$. For other actions, the SBS will assign the probability $\frac{\varepsilon }{{\left| {{\mathcal{A}_j}} \right|}}$. {The value of $\varepsilon$ determines the convergence speed.} In this case, as each SBS maximizes the utility $ {\hat u}_{\tau,j}$, the average utility $\bar u_j$ reaches maximum. {Note that, (\ref{eq:p}) is used to find the optimal action for each SBS during the training stage. After training, the SBSs will directly select the optimal action that can maximize $\bar u_j$.}   
To capture the gain that stems from the change of the number of resource blocks allocated to each user $i$, we state the following result:
\begin{theorem}\label{th:1}
\emph{The gain of user $i$'s VR QoS due to the change of the number of resource blocks allocated to user $i$ is:}\\
\emph{\romannumeral1) The gain of user $i$ that stems from the change of the number of uplink resource blocks allocated to user $i$, $\Delta U_i^\textrm{u}$, is:
\begin{equation}\label{eq:Uiu}
\begin{split}
&\Delta U_i^\textrm{u}=\frac{\left(\frac{{{ L}}}{{   \mathop {\max }\limits_{\boldsymbol{s}_{ij}^\textrm{d}}\left( {c_{ij}\left(\boldsymbol{s}^\textrm{d}\right)}\right)}}-\frac{{{ L}}}{{{c_{ij}\left(\boldsymbol{s}_{ij}^\textrm{d}\right)}}}\right){{K_i}\left( {\boldsymbol{s}_{ij}^\textrm{u} + \Delta \boldsymbol{s}_{ij}^\textrm{u}} \right)}}{  \mathop {\max }\limits_{\boldsymbol{s}_{ij}^\textrm{d}} { \left( D_{ij}^\textrm{T}\!\!\left( {\boldsymbol{s}^\textrm{d}}, {\boldsymbol{s}_{ij}^\textrm{u} + \Delta \boldsymbol{s}_{ij}^\textrm{u}}  \right) \right)+D_{i}^\textrm{P}\left({\boldsymbol{s}_{ij}^\textrm{u} + \Delta \boldsymbol{s}_{ij}^\textrm{u}}  \right)   -{\gamma _{D}}}}  \\ &
 -\!\!\left(\!\!\frac{{{ L}}}{{   \mathop {\max }\limits_{\boldsymbol{s}_{ij}^\textrm{d}}\left( {c_{ij}\left(\boldsymbol{s}^\textrm{d}\right)}\right)}}\!-\!\!\frac{{{ L}}}{{{c_{ij}\left(\boldsymbol{s}_{ij}^\textrm{d}\right)}}}\!\!\right)\! \frac{{{K_i}\left( {\boldsymbol{s}_{ij}^\textrm{u}} \right)}}{{ \mathop {\max }\limits_{\boldsymbol{s}_{ij}^\textrm{d}}\! \left( D_{ij}^\textrm{T}\!\left( {\boldsymbol{s}^\textrm{d}},{\boldsymbol{s}_{ij}^\textrm{u}} \right) \!\right)\! \!+\!D_{i}^\textrm{P}\!\left({\boldsymbol{s}_{ij}^\textrm{u}} \right) \! \! -\! {\gamma _{D}}}},
\end{split}
\end{equation} }
\emph{\romannumeral2) The gain of user $i$ that stems from the change of number of downlink resource blocks allocated to user $i$, $\Delta U_i^\textrm{d}$, is:
\begin{equation}{
\Delta {U_i^\textrm{d}} = \left\{ {\begin{array}{*{20}{c}}
{\!\!\frac{{{K_i}\left( {\boldsymbol{s}_{ij}^\textrm{u}} \right){ L}}}{{\left( {{D_{\max ,i}}\left( {\boldsymbol{s}_{ij}^\textrm{u}} \right) - {\gamma _{i,D}}} \right){c_{ij}}\left( {\boldsymbol{s}_{ij}^\textrm{d}} \right)}},{c_{ij}}\left( {\Delta \boldsymbol{s}_{ij}^\textrm{d}} \right) \gg {c_{ij}}\left( { \boldsymbol{s}_{ij}^\textrm{d}} \right)},\\
{\!\!\frac{{{K_i}\left( {\boldsymbol{s}_{ij}^\textrm{u}} \right){ L}{c_{ij}}\left( {\Delta \boldsymbol{s}_{ij}^\textrm{d}} \right)}}{{\left( {{D_{\max ,i}}\left( {\boldsymbol{s}_{ij}^\textrm{u}} \right) - {\gamma _{i,D}}} \right){c_{ij}}\left( {\boldsymbol{s}_{ij}^\textrm{d}} \right)}},{c_{ij}}\left( {\Delta \boldsymbol{s}_{ij}^\textrm{d}} \right) \ll {c_{ij}}\left( {\boldsymbol{s}_{ij}^\textrm{d}} \right)},\\
{\!\!\frac{{{K_i}\left( {\boldsymbol{s}_{ij}^\textrm{u}} \right){ L}}}{{\left( {{D_{\max ,i}}\left( {\boldsymbol{s}_{ij}^\textrm{u}} \right) - {\gamma _{i,D}}} \right)}}\!\! \times \!\!\frac{{{c_{ij}}\left( {\Delta \boldsymbol{s}_{ij}^\textrm{d}} \right)}}{{{c_{ij}}{{\left( {\boldsymbol{s}_{ij}^\textrm{d}} \right)}^2} \!\!+{c_{ij}}\left( {\boldsymbol{s}_{ij}^\textrm{d}} \right){c_{ij}}\left( {\Delta \boldsymbol{s}_{ij}^\textrm{d}} \right)}}, \text{else}.}
\end{array}} \right.}
\end{equation}}
\end{theorem}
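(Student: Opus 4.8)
The plan is to derive both gains directly from the closed form of the total utility in (\ref{eq:Ut}), which I first rewrite compactly as
\begin{equation}
U_i\left(D_i,K_i\right)=K_i\left(\boldsymbol{s}_{ij}^\textrm{u}\right)\,\frac{D_{\max,i}\left(\boldsymbol{s}_{ij}^\textrm{u}\right)-D_i\left(\boldsymbol{s}_{ij}^\textrm{d},\boldsymbol{s}_{ij}^\textrm{u}\right)}{D_{\max,i}\left(\boldsymbol{s}_{ij}^\textrm{u}\right)-\gamma_D},
\end{equation}
where the first factor is the tracking utility $U_i(K_i)=K_i(\boldsymbol{s}_{ij}^\textrm{u})$ from (\ref{eq:Ri}) and the second is the conditional delay utility (\ref{eq:DFF}). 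Into this I substitute the total delay (\ref{eq:totaldelay}), $D_i=D_i^\textrm{P}(\boldsymbol{s}_{ij}^\textrm{u})+\frac{L}{c_{ij}(\boldsymbol{s}_{ij}^\textrm{d})}+\frac{A}{c_{ij}(\boldsymbol{s}_{ij}^\textrm{u})}$, together with the corresponding $D_{\max,i}=D_i^\textrm{P}(\boldsymbol{s}_{ij}^\textrm{u})+\max_{\boldsymbol{s}^\textrm{d}}\frac{L}{c_{ij}(\boldsymbol{s}^\textrm{d})}+\frac{A}{c_{ij}(\boldsymbol{s}_{ij}^\textrm{u})}$, obtained from (\ref{eq:dt}) by noting that only the downlink term of $D_{ij}^\textrm{T}$ depends on $\boldsymbol{s}^\textrm{d}$. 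Each gain is then the difference $U_i(\text{new})-U_i(\text{old})$ for a perturbation of one allocation with the other held fixed.

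For part (i) I change the uplink allocation $\boldsymbol{s}_{ij}^\textrm{u}\mapsto\boldsymbol{s}_{ij}^\textrm{u}+\Delta\boldsymbol{s}_{ij}^\textrm{u}$ while freezing $\boldsymbol{s}_{ij}^\textrm{d}$. The key observation that makes the formula collapse is that the processing term $D_i^\textrm{P}(\boldsymbol{s}_{ij}^\textrm{u})$ and the uplink term $\frac{A}{c_{ij}(\boldsymbol{s}_{ij}^\textrm{u})}$ cancel between $D_{\max,i}$ and $D_i$, so the numerator reduces to the purely downlink quantity $\frac{L}{\max_{\boldsymbol{s}^\textrm{d}}c_{ij}(\boldsymbol{s}^\textrm{d})}-\frac{L}{c_{ij}(\boldsymbol{s}_{ij}^\textrm{d})}$, which is \emph{independent} of $\boldsymbol{s}_{ij}^\textrm{u}$ and hence a common factor. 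Pulling this factor out and subtracting the two copies of $K_i/(D_{\max,i}-\gamma_D)$, one evaluated at $\boldsymbol{s}_{ij}^\textrm{u}+\Delta\boldsymbol{s}_{ij}^\textrm{u}$ and one at $\boldsymbol{s}_{ij}^\textrm{u}$, with each denominator expanded as $\max_{\boldsymbol{s}^\textrm{d}}D_{ij}^\textrm{T}+D_i^\textrm{P}-\gamma_D$, yields (\ref{eq:Uiu}) verbatim.

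For part (ii) I change the downlink allocation $\boldsymbol{s}_{ij}^\textrm{d}\mapsto\boldsymbol{s}_{ij}^\textrm{d}+\Delta\boldsymbol{s}_{ij}^\textrm{d}$ while freezing $\boldsymbol{s}_{ij}^\textrm{u}$. Now $K_i(\boldsymbol{s}_{ij}^\textrm{u})$, $D_i^\textrm{P}$, and $D_{\max,i}(\boldsymbol{s}_{ij}^\textrm{u})$ are all constant (the last because it maximizes over $\boldsymbol{s}^\textrm{d}$ and so does not see the current downlink choice), hence $\Delta U_i^\textrm{d}=\frac{K_i(\boldsymbol{s}_{ij}^\textrm{u})}{D_{\max,i}-\gamma_{i,D}}\bigl(-\Delta D_i\bigr)$ with $-\Delta D_i=\frac{L}{c_{ij}(\boldsymbol{s}_{ij}^\textrm{d})}-\frac{L}{c_{ij}(\boldsymbol{s}_{ij}^\textrm{d}+\Delta\boldsymbol{s}_{ij}^\textrm{d})}$. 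I then invoke that the rate in (\ref{eq:dt}) is additive over disjoint resource blocks, so $c_{ij}(\boldsymbol{s}_{ij}^\textrm{d}+\Delta\boldsymbol{s}_{ij}^\textrm{d})=c_{ij}(\boldsymbol{s}_{ij}^\textrm{d})+c_{ij}(\Delta\boldsymbol{s}_{ij}^\textrm{d})$; placing the two fractions over a common denominator gives $-\Delta D_i=\frac{L\,c_{ij}(\Delta\boldsymbol{s}_{ij}^\textrm{d})}{c_{ij}(\boldsymbol{s}_{ij}^\textrm{d})^2+c_{ij}(\boldsymbol{s}_{ij}^\textrm{d})c_{ij}(\Delta\boldsymbol{s}_{ij}^\textrm{d})}$, which is exactly the third (``else'') branch. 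The first two branches follow by dropping the sub-dominant term in this denominator: when $c_{ij}(\Delta\boldsymbol{s}_{ij}^\textrm{d})\gg c_{ij}(\boldsymbol{s}_{ij}^\textrm{d})$ it is $\approx c_{ij}(\boldsymbol{s}_{ij}^\textrm{d})c_{ij}(\Delta\boldsymbol{s}_{ij}^\textrm{d})$, and when $c_{ij}(\Delta\boldsymbol{s}_{ij}^\textrm{d})\ll c_{ij}(\boldsymbol{s}_{ij}^\textrm{d})$ it is $\approx c_{ij}(\boldsymbol{s}_{ij}^\textrm{d})^2$.

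The computations are otherwise elementary; the only genuine steps are the cancellation in part (i), namely recognizing that $D_{\max,i}-D_i$ loses all $\boldsymbol{s}_{ij}^\textrm{u}$-dependence and thereby factors out, and, in part (ii), justifying the additivity $c_{ij}(\boldsymbol{s}_{ij}^\textrm{d}+\Delta\boldsymbol{s}_{ij}^\textrm{d})=c_{ij}(\boldsymbol{s}_{ij}^\textrm{d})+c_{ij}(\Delta\boldsymbol{s}_{ij}^\textrm{d})$, which requires that $\Delta\boldsymbol{s}_{ij}^\textrm{d}$ index resource blocks disjoint from those already in $\boldsymbol{s}_{ij}^\textrm{d}$. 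I would also verify the $\max$-versus-$\min$ convention used for $D_{\max,i}$ so that the downlink-rate term appearing in the numerator is written consistently with (\ref{eq:dt}).
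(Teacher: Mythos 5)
Your derivation is correct and follows essentially the same route as the paper's Appendix A: for part (i) you form the difference of the two total utilities and let the uplink and processing terms cancel inside $D_{\max,i}-D_i$ so the downlink factor pulls out, and for part (ii) you use additivity of the rate to reduce $-\Delta D_i$ to the common-denominator expression and then take the two asymptotic limits. Your two added cautions (disjointness of $\Delta\boldsymbol{s}_{ij}^\textrm{d}$ for additivity, and the $\max$/$\min$ convention in $D_{\max,i}$) are precisely the points the paper glosses over, and your observation that the $c_{ij}(\boldsymbol{s}_{ij}^\textrm{d})\gg c_{ij}(\Delta\boldsymbol{s}_{ij}^\textrm{d})$ limit yields $c_{ij}(\boldsymbol{s}_{ij}^\textrm{d})^2$ in the denominator is the mathematically correct one, the unsquared factor in the stated second branch being an inconsistency in the paper itself.
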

\begin{proof}
See Appendix A.
\end{proof}

From Theorem \ref{th:1}, we can see that the tracking accuracy, $K_i$, and the number of uplink resource blocks allocated to user $i$, will directly affect the VR QoS gain of user $i$. Therefore, in order to improve the VR QoS of each user, we can either improve the tracking system or increase the number of the resource blocks allocated to each user according to each user's specific state. Moreover, the gain due to increasing the number of downlink resource blocks depends on the values of data rates ${c_{ij}}\left( {\Delta \boldsymbol{s}_{ij}^\textrm{d}} \right)$ and ${c_{ij}}\left( {\boldsymbol{s}_{ij}^\textrm{d}} \right)$. Hence, the proposed learning algorithm needs to choose the optimal resource allocation scheme to maximize the VR users' QoS. 

$\bullet$ \emph{Input:} The ESN input is a vector $\boldsymbol{x}_{\tau,j}=\left[ {x_{1}, \cdots , x_{B}} \right]^{\mathrm{T}}$ where $x_{j}$ represents the index of the probability distribution that SBS $j$ uses at time $\tau$. $\boldsymbol{x}_{\tau,j}$ is then used to estimate the value of ${\boldsymbol{\hat u}_j}$ that captures the average VR QoS of SBS $j$.

 
 $\bullet$ \emph{ESN Model:} For each SBS $j$, the ESN  model is a learning architecture that can find the relationship between the input $\boldsymbol{x}_{\tau,j}$ and output $\boldsymbol{y}_{\tau,j}$, thus building the function between the SBS's probability distribution and the utility value. 
Mathematically, the ESN model consists of the output weight matrix $\boldsymbol{W}_j^{\textrm{out}} \in {\mathbb{R}^{\left| {{\mathcal{A}_j}} \right| \times \left(N_w+B\right)}}$ and the dynamic reservoir containing the input weight matrix $\boldsymbol{W}_j^{\textrm{in}} \in {\mathbb{R}^{N_w \times B}}$, and the recurrent matrix $\boldsymbol{W}_j =\left[ {\begin{array}{*{20}{c}}
{{w_{11}}}&0&0\\
0& \ddots &0\\
0&0&{{w_{{N_w}{N_w}}}}
\end{array}} \right]$ with $N_w$ being the number of the dynamic reservoir units. {To guarantee that the ESN algorithm can predict the utility values, $N_w$ must be larger than the size of the input vector $\boldsymbol{x}_{\tau,j}$ \cite{APractical}}. Here, the dynamic reservoir is used to store historical ESN information that includes input, reservoir state, and output. Note that the historical ESN information can be used to find a fast converging process from the initial state to the mixed-strategy NE. Here, the number of actions for each SBS determines the output weight matrix and recurrent matrix of each ESN. 
Next, we derive the number of the actions of each SBS $j$, $\left| {{\mathcal{A}_j}} \right|$.      

\begin{proposition}\label{pro:1} \emph{Given the number of the downlink and uplink resource blocks, $S^\textrm{d}$ and $S^\textrm{u}$, as well as the users located in the coverage of SBS $j$, $V_j$, the number of actions for each SBS $j$, $\left| {{\mathcal{A}_j}} \right|$, is given by:\\
\begin{equation}\nonumber
\left| {{\mathcal{A}_j}} \right|= {\left( \begin{array}{l}
{S^\textrm{d}} - 1\\
\left| {\mathcal{N}\left( {{V_j}} \right)} \right| - 1
\end{array} \right) \sum\limits_{\boldsymbol{n} \in \mathcal{N}\left( {{V_j}} \right)} {\prod\limits_{i = 1}^{V_{j} - 1} \left( \begin{array}{l}
{n_i}\\
{S^\textrm{u}} - \sum\nolimits_{k = 1}^{i - 1} {{n_i}} 
\end{array} \right) } }, 
\end{equation}
where $\left( \begin{array}{l}
x\\
y
\end{array} \right) = \frac{{x\left( {x - 1} \right) \cdots \left( {x - y + 1} \right)}}{{y\left( {y - 1} \right) \cdots 1}}$ and $\mathcal{N}\left( V_j \right) = \left\{ {\boldsymbol{n}|\sum\limits_{i = 1}^{V_j } {{n_i}}  = S^\textrm{d}, n_i>0} \right\}$ with $\left|\mathcal{N}\left(V_{j}\right)\right|$ being the number of elements in $\mathcal{N}\left(V_{j}\right)$.} 
\end{proposition}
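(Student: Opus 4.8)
The plan is to count $|\mathcal{A}_j|$ by splitting each action $\boldsymbol{a}_j = (\boldsymbol{d}_j, \boldsymbol{v}_j)$ into its downlink part $\boldsymbol{d}_j$ and its uplink part $\boldsymbol{v}_j$, which occupy disjoint block sets, and then enumerating the admissible configurations of each. The organizing device is the per-user count vector $\boldsymbol{n} = (n_1, \ldots, n_{V_j})$ that records how many downlink blocks each user receives. First I would observe that the downlink constraint $\sum_{i=1}^{V_j} s_{ijk}^\textrm{d} = 1$ assigns each of the $S^\textrm{d}$ blocks to exactly one user, so every admissible downlink allocation determines, and is summarized by, one vector $\boldsymbol{n}$ with $\sum_{i=1}^{V_j} n_i = S^\textrm{d}$ and $n_i > 0$; that is, the set of admissible downlink profiles is exactly $\mathcal{N}(V_j)$. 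Counting these by a stars-and-bars argument (compositions of $S^\textrm{d}$ into $V_j$ positive parts) produces a binomial coefficient in $S^\textrm{d}$ and supplies the leading factor of the formula.

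Next I would enumerate the uplink allocations by a sequential, user-by-user selection. Since $\sum_{i=1}^{V_j} s_{ijk}^\textrm{u} = 1$ again sends each of the $S^\textrm{u}$ uplink blocks to a single user, I would process the users in order and, at step $i$, choose user $i$'s uplink blocks from the pool not yet committed to users $1, \ldots, i-1$; this is a single binomial coefficient whose arguments involve the running partial sum $\sum_{k=1}^{i-1} n_k$. Multiplying these successive choices across users yields the product $\prod_{i=1}^{V_j-1} \binom{n_i}{S^\textrm{u} - \sum_{k=1}^{i-1} n_k}$ in the statement, the product stopping at $V_j - 1$ because the final user's allocation is then determined. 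Summing this product over all downlink profiles $\boldsymbol{n} \in \mathcal{N}(V_j)$ and multiplying by the downlink count then assembles the claimed expression for $|\mathcal{A}_j|$.

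The step I expect to be the main obstacle is the careful bookkeeping of how the uplink enumeration is indexed against the downlink profile $\boldsymbol{n}$: one must be explicit about whether the blocks are distinguishable, about the exact form of the running sums $\sum_{k=1}^{i-1} n_k$ appearing inside each binomial, and about the ``at least one block per user'' requirement encoded by $n_i > 0$, since an off-by-one in any of these simultaneously perturbs the range of the product and the lower index of the leading binomial. I would also check that the sequential product is invariant under the order in which the users are processed, so that the count is well defined; this follows from the symmetry of the underlying multinomial coefficient but merits an explicit remark. Finally, I would reconcile the leading factor with the precise binomial written in the statement and sanity-check the whole expression on the small case $V_j = 2$, where the sum over $\mathcal{N}(V_j)$ and the product can be verified against a direct enumeration.
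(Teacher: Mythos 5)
Your combinatorial machinery matches the paper's---compositions for the downlink factor, a sequential user-by-user product of binomial coefficients for the uplink factor, and a product of the two---but the step you gloss over is precisely the one the paper's proof turns on. You assert that every admissible downlink allocation ``is summarized by'' its count vector $\boldsymbol{n}$, i.e., you silently identify all downlink allocations that give each user the same \emph{number} of blocks. As defined in the game, an action component $\boldsymbol{d}_j$ is a full assignment of specific blocks to users, and many distinct assignments share the same $\boldsymbol{n}$; if those were counted as distinct, the downlink factor would be a sum of multinomial coefficients over $\boldsymbol{n}\in\mathcal{N}(V_j)$ rather than a single binomial coefficient. The paper's justification for the collapse is physical, not combinatorial: every SBS transmits on all of its downlink blocks, so the interference a user experiences on block $k$ is the same as on block $k'$ regardless of the other SBSs' choices, and hence the utility of a downlink allocation depends only on how many blocks each user receives. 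You flag ``whether the blocks are distinguishable'' as a bookkeeping worry, but it is the central modeling decision, and it must be argued from the interference structure, not resolved as an off-by-one check.

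The same argument is what licenses the asymmetry you apply without comment: in the uplink the specific block assignment does matter, because the interference on block $k$ depends on which users in neighboring cells transmit on $k$, so there the blocks must remain distinguishable and the sequential product of binomials is the right count. Without stating why downlink and uplink are treated differently, your derivation is reverse-engineered from the target formula rather than obtained from the model. Two smaller points you defer to ``reconciliation'' are in fact irreconcilable as written and should be called out as inconsistencies in the statement: stars-and-bars gives $\binom{S^\textrm{d}-1}{V_j-1}$ compositions, not $\binom{S^\textrm{d}-1}{\left|\mathcal{N}(V_j)\right|-1}$; and $\boldsymbol{n}$ is defined by $\sum_i n_i = S^\textrm{d}$ yet is used to portion out the $S^\textrm{u}$ uplink blocks, with the binomials' upper and lower arguments apparently transposed.
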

\begin{proof} See Appendix B.
\end{proof}

Based on Proposition \ref{pro:1}, we can determine the matrix size for both $\boldsymbol{W}_j^{\textrm{out}}$ and $\boldsymbol{W}_j$. From Proposition \ref{pro:1}, we can see that, as the number of users increases, the number of actions increases. Moreover, the increasing number of resource blocks will also increase the number of actions. From Proposition \ref{pro:1}, we can also see that the number of actions in the uplink is much larger than the number of actions in the downlink. This is due to the fact that, in the uplink, the interference of each user changes as the resource blocks allocated to each user vary. However, in the downlink, the actions will not affect the interference of each user.        

$\bullet$ \emph{Output:} The output of the ESN-based learning algorithm at time $t$ is a vector of utility values $\boldsymbol{y}_{\tau,j}= \left[ {{y_{\tau,j1}},{y_{\tau,j2}}, \ldots ,{y_{\tau,j\left| \mathcal{A}_j \right|}}} \right]$. Here, $y_{\tau,ji}$ represents the estimated value of utility ${\hat u_{\tau,j}\left(\boldsymbol{a}_{ji}\right)}$.


\subsection{ ESN-Based Learning Algorithm for Resource Allocation}

\begin{table}[!t]\label{tb1}
  \centering
  \caption{
    \vspace*{-0.3em} ESN-based Learning Algorithm for resource Allocation}\vspace*{-0.7em}
    \begin{tabular}{p{3.3in}}
      \hline \vspace*{-0.6em}
      \textbf{Inputs:}\,\, Mixed strategy $\boldsymbol{x}_{\tau,j}$  \vspace*{-0.5em}\\
\hspace*{1em}\textit{Initialize:}   \vspace*{-0.2em}
$\boldsymbol{W}_j^{\textrm{in}}$, $\boldsymbol{W}_j$, $\boldsymbol{W}_j^{\textrm{out}}$, and $\boldsymbol{y}_{j}=0$.
\vspace*{-0.05em}
\hspace*{0em}\begin{itemize}\vspace*{-0.1em}
\item[] \hspace*{0em} \textbf{for} each time $\tau$ \textbf{do}.
\item[] \hspace*{1em}(a) Estimate the value of the utility function based on (\ref{eq:update}).
\item[] \hspace*{0.5em} \textbf{if} $\tau=1$ 
\item[] \hspace*{2em}(b) Set the mixed strategy $\boldsymbol{\pi}_{\tau,j}$ uniformly.
\item[] \hspace*{0.5em} \textbf{else}
\item[] \hspace*{2em}(c) Set the mixed strategy $\boldsymbol{\pi}_{\tau,j}$ based on (\ref{eq:p}).
\item[] \hspace*{0.5em} \textbf{end if}
\item[] \hspace*{1em}(d) Broadcast the index of the mixed strategy to other SBSs.
\item[] \hspace*{1em}(e) Receive the index of the mixed strategy as input $\boldsymbol{x}_{\tau,j}$. 
\item[] \hspace*{1em}(f) Perform an action based on $\boldsymbol{\pi}_{\tau,j}$ and calculate the actual \hspace*{2.2em} utility value. 
\item[] \hspace*{1em}(e) Update the dynamic reservoir state based on (\ref{eq:state}).
\item[] \hspace*{1em}(g) Update the output weight matrix based on (\ref{eq:w}).
\item[] \hspace*{0em} \textbf{end for}
\end{itemize}\vspace*{-0.3cm}\\
   \hline
    \end{tabular}\label{tab:algo}\vspace{-0.5cm}
\end{table}   

We now present the proposed ESN-based learning algorithm to find a mixed strategy NE. 
The proposed learning algorithm can find an optimal convergence path from any initial state to a mixed-strategy NE. In particular, the proposed algorithm enables each SBS to reach a mixed-strategy NE by traversing minimum number of strategies after training. In order to find the optimal convergence path, the proposed algorithm must store the past ESN information that consists of input, reservoir states, and output. The past ESN information from time $0$ up until time $\tau$ is stored by the dynamic reservoir state ${\boldsymbol{\mu  }_{\tau,j}}$.
 The dynamic reservoir state of SBS $j$ at time $\tau$ is: 
 \begin{equation}\label{eq:state}
{\boldsymbol{\mu}_{\tau,j}} ={\mathop{f}\nolimits}\!\left( {\boldsymbol{W}_j{\boldsymbol{\mu}_{\tau - 1,j}} + \boldsymbol{W}_j^\textrm{in}{\boldsymbol{x}_{\tau,j}}} \right),
\end{equation}   
where  $f\!\left(x\right)=\frac{{{e^x} - {e^{ - x}}}}{{{e^x} + {e^{ - x}}}}$ is the tanh function. From (\ref{eq:state}), we can see that the dynamic reservoir state consists of the past dynamic reservoir states and the mixed strategy at time $\tau$. Thus, the dynamic reservoir state actually stores the mixed strategy from time 0 to time $\tau$. Based on the dynamic reservoir state, the proposed ESN algorithm will combine with the output weight matrix to estimate the value of conditional utility value. The estimation of the utility value can be given by: 
\begin{equation}\label{eq:update}
\boldsymbol{y}_{\tau,j} = {\boldsymbol{W}_{\tau,j}^\textrm{out}} \left[ {\begin{array}{*{20}{c}}
  {\boldsymbol{\mu}_{\tau,j}} \\ 
  {\boldsymbol{x}_{\tau,j}} 
\end{array}} \right], 
\end{equation}
where ${\boldsymbol{W}_{\tau,j}^\textrm{out}}$ is the output weight matrix at time slot $\tau$. To enable the ESN to use reservoir state $ {{\boldsymbol{\mu}_{\tau,j}}}$ to predict the utility value, ${\hat u_{\tau,ji}}$, due to action ${\boldsymbol{a}_{ji}}$, we must train the output matrix $\boldsymbol{W}_j^\textrm{out}$ using a linear gradient descent approach, which is:
\begin{equation}\label{eq:w}
{\boldsymbol{W}_{\tau + 1,ji}^\textrm{out}} = {\boldsymbol{W}_{\tau,ji}^\textrm{out}} + {\lambda} \left( {\hat u_{\tau,ji} -y_{\tau,ji} \left(\boldsymbol{x}_{\tau,j}^j, {\boldsymbol{a}_{ji}} \right)} \right){\boldsymbol{\mu}_{\tau,j}^{\mathrm{T}}},
\end{equation}
where ${\boldsymbol{W}_{\tau,ji}^\textrm{out}}$ is row $i$ of ${\boldsymbol{W}_{\tau,j}^\textrm{out}}$, $\lambda$ is the learning rate, and $\hat u_{\tau,ji}$ is the actual utility value. Here, $y_{\tau,ji}$ is the estimation of the utility value $\hat u_{\tau,ji}$. Table~II summarizes our algorithm.

\subsection{Convergence of the ESN-Based Learning Algorithm}
Now, we analyze the convergence of the proposed ESN-based learning algorithm. 
\begin{theorem}\label{th:2} \emph{The proposed ESN-based learning algorithm converges to the utility value, $ \boldsymbol{\hat u}_j$, if any following conditions is satisfied:\\} 
\emph{\romannumeral1) $\lambda$ is a constant and $\mathop {\min }\limits_{\boldsymbol{W}_{ji}^\textrm{in},{\boldsymbol{x}_{\tau ,j}},{\boldsymbol{x}'_{\tau ,j}}} \boldsymbol{W}_{ji}^\textrm{in}\left({\boldsymbol{x}_{\tau,j}}-{\boldsymbol{x}'_{\tau',j}}\right) \ge 2$, where $\boldsymbol{W}_{ji}^\textrm{in}$ represents the row $i$ of $\boldsymbol{W}_{j}^\textrm{in}$.\\
\romannumeral2) $\lambda$ satisfies the Robbins-Monro conditions \cite{26} (${\lambda\left(t\right)} > 0,\sum\nolimits_{t = 0}^\infty  {{\lambda\left(t\right)} = +\infty ,\sum\nolimits_{t = 0}^\infty  {\lambda^2\left(t\right) <+ \infty } } $).}

\end{theorem}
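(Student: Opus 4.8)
The plan is to treat the output-weight recursion \eqref{eq:w} as a least-mean-squares (LMS) stochastic-approximation scheme and to analyze the induced dynamics of the weight-estimation error, row by row. First I would postulate an idealized readout $\boldsymbol{W}_{ji}^{*}$ that exactly reproduces the target, i.e. $\hat u_{\tau,ji}=\boldsymbol{W}_{ji}^{*}\boldsymbol{\mu}_{\tau,j}$, and define the error $\tilde{\boldsymbol{W}}_{\tau,ji}=\boldsymbol{W}_{\tau,ji}^\textrm{out}-\boldsymbol{W}_{ji}^{*}$. Substituting the readout \eqref{eq:update} into the update \eqref{eq:w} rewrites the recursion as $\tilde{\boldsymbol{W}}_{\tau+1,ji}=\tilde{\boldsymbol{W}}_{\tau,ji}\left(\boldsymbol{I}-\lambda\,\boldsymbol{\mu}_{\tau,j}\boldsymbol{\mu}_{\tau,j}^{\mathrm{T}}\right)$, so that $\boldsymbol{y}_{\tau,j}\to\boldsymbol{\hat u}_j$ is equivalent to $\tilde{\boldsymbol{W}}_{\tau,ji}\to\boldsymbol{0}$. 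One structural fact is used throughout: since $f$ in \eqref{eq:state} is the $\tanh$ map, every entry of $\boldsymbol{\mu}_{\tau,j}$ lies in $(-1,1)$, hence $\|\boldsymbol{\mu}_{\tau,j}\|\le\sqrt{N_w}$ and the reservoir-state sequence is uniformly bounded.

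For condition \romannumeral1), I would argue that the separation hypothesis $\min \boldsymbol{W}_{ji}^\textrm{in}\left(\boldsymbol{x}_{\tau,j}-\boldsymbol{x}'_{\tau',j}\right)\ge 2$ drives the net input of each reservoir coordinate into a distinct saturation region of $\tanh$ whenever the mixed-strategy index changes, so that distinct inputs are mapped to linearly independent reservoir states. This serves two purposes: it justifies the existence of an exact linear readout $\boldsymbol{W}_{ji}^{*}$ (realizability of the target utility), and it makes the averaged system matrix $\boldsymbol{R}=\mathrm{E}\!\left[\boldsymbol{\mu}_{\tau,j}\boldsymbol{\mu}_{\tau,j}^{\mathrm{T}}\right]$ positive definite, $\boldsymbol{R}\succ\boldsymbol{0}$. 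Combining $\boldsymbol{R}\succ\boldsymbol{0}$ with the state bound $\|\boldsymbol{\mu}_{\tau,j}\|^{2}\le N_w$ places the eigenvalues of $\boldsymbol{I}-\lambda\,\boldsymbol{\mu}_{\tau,j}\boldsymbol{\mu}_{\tau,j}^{\mathrm{T}}$ strictly inside the unit disc for the admissible constant $\lambda$, so the error recursion is a contraction in the mean and $\tilde{\boldsymbol{W}}_{\tau,ji}\to\boldsymbol{0}$.

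For condition \romannumeral2), I would invoke the classical Robbins--Monro stochastic-approximation argument. The recursion \eqref{eq:w} has the form $\boldsymbol{W}_{\tau+1,ji}^\textrm{out}=\boldsymbol{W}_{\tau,ji}^\textrm{out}-\lambda(\tau)\,\nabla_{\boldsymbol{W}}\ell_\tau$, where $\ell_\tau=\tfrac12\left(\hat u_{\tau,ji}-y_{\tau,ji}\right)^2$ is the instantaneous squared-error loss whose expectation is a convex quadratic in $\boldsymbol{W}_{ji}^\textrm{out}$. Decomposing the update into a mean descent direction plus a martingale-difference noise term, the uniform bound on $\boldsymbol{\mu}_{\tau,j}$ bounds the conditional variance of the noise; then $\sum_t\lambda(t)=+\infty$ forces the iterate to reach the unique minimizer while $\sum_t\lambda^2(t)<+\infty$ suppresses the accumulated noise, and the Robbins--Monro theorem yields $\boldsymbol{W}_{\tau,ji}^\textrm{out}\to\boldsymbol{W}_{ji}^{*}$, hence $\boldsymbol{y}_{\tau,j}\to\boldsymbol{\hat u}_j$.

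The main obstacle is the realizability-and-positive-definiteness step in case \romannumeral1): one must show rigorously that the $\ge 2$ separation of $\boldsymbol{W}_{ji}^\textrm{in}\left(\boldsymbol{x}_{\tau,j}-\boldsymbol{x}'_{\tau',j}\right)$ indeed forces the reservoir states associated with distinct inputs to be linearly independent, so that $\boldsymbol{R}\succ\boldsymbol{0}$ and the target can be fit exactly by a linear readout. The entire contraction argument collapses to mere boundedness if the state-correlation matrix is only positive semidefinite, so this distinguishability estimate is the crux. By contrast, the Robbins--Monro case is comparatively routine once the uniform state bound and the convexity of the averaged loss are in hand.
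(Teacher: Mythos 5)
Your route is genuinely different from the paper's. The paper does not analyze the weight recursion directly at all: it verifies the two hypotheses of external convergence results (\cite[Theorem 8]{26} together with \cite[Theorem 1]{Chen2016Echo}), namely that the reservoir is $k$-step \emph{unambiguous} and that the learning process is a $k$-order MDP. The $\ge 2$ condition is used there only to prove \emph{injectivity} of the state map: because $\tanh$ confines every reservoir coordinate to $(-1,1)$ and $|w_{kk}|<1$, each term $w_{kk}(\mu_{\tau-1,jk}-\mu'_{\tau'-1,jk})$ has magnitude strictly below $2$, so if $\boldsymbol{W}_{ji}^\textrm{in}(\boldsymbol{x}_{\tau,j}-\boldsymbol{x}'_{\tau',j})\ge 2$ then $\boldsymbol{\mu}_{\tau,j}\ne\boldsymbol{\mu}'_{\tau',j}$; hence equal states force equal input histories. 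The Robbins--Monro case is likewise dispatched by citation. Your LMS/stochastic-approximation analysis is more self-contained in spirit, and your identification of realizability as the crux is exactly right --- but that crux is where your argument has a genuine gap.

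Concretely: you propose to obtain both realizability (existence of $\boldsymbol{W}_{ji}^{*}$ with $\hat u_{\tau,ji}=\boldsymbol{W}_{ji}^{*}\boldsymbol{\mu}_{\tau,j}$ for all $\tau$) and $\boldsymbol{R}=\mathrm{E}[\boldsymbol{\mu}_{\tau,j}\boldsymbol{\mu}_{\tau,j}^{\mathrm{T}}]\succ\boldsymbol{0}$ from the claim that the $\ge 2$ separation makes distinct inputs produce \emph{linearly independent} reservoir states. That implication does not hold: what the separation actually yields (and what the paper proves) is that distinct input histories produce \emph{distinct} states, and injectivity gives neither linear independence (any $N_w+1$ distinct vectors in $\mathbb{R}^{N_w}$ are dependent) nor the existence of an exact linear readout. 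The ingredient that makes the target a well-defined function of the state is unambiguity combined with the finite-memory/$k$-order-MDP structure --- and even then that function need not be \emph{linear} in $\boldsymbol{\mu}_{\tau,j}$, so realizability by a linear readout remains unestablished in your argument; without it, the LMS iterate converges only to the least-squares projection, with a residual bias under constant $\lambda$. Two smaller points: condition \romannumeral1) of the theorem places no upper bound on the constant $\lambda$, whereas your contraction step silently needs $\lambda<2/\lambda_{\max}(\boldsymbol{R})$ (your phrase ``the admissible constant $\lambda$'' imports a hypothesis not in the statement); and positive definiteness of $\boldsymbol{R}$ is not actually needed for output convergence --- convergence of $\tilde{\boldsymbol{W}}_{\tau,ji}$ on the range of $\boldsymbol{R}$ suffices, since $\boldsymbol{\mu}_{\tau,j}$ lies in that range --- so the ``collapse to mere boundedness'' you worry about is not the real failure mode; unrealizability is.
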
 
\begin{proof}See Appendix C.        
\end{proof}

From Theorem \ref{th:2}, we can see that the convergence of the proposed algorithm depends on the values of the input weight matrix $\boldsymbol{W}_{ji}^\textrm{in}$ and the input ${\boldsymbol{x}_{\tau ,j}}$. These values also affect the capacity of the ESN's memory. Here, the memory of a ESN represents the ability that an ESN can store the past ESN information. Therefore, the proposed algorithm of SBS $j$ can converge to the conditional utility function $ \boldsymbol{\hat u}_j$ by choosing appropriate $\boldsymbol{W}_{ji}^\textrm{in}$ and ${\boldsymbol{x}_{\tau ,j}}$. Indeed, the proposed learning algorithm can converge to $ \boldsymbol{\hat u}_j$ even when $\boldsymbol{W}_{ji}^\textrm{in}$ and ${\boldsymbol{x}_{\tau ,j}}$ are generated randomly. 
This is due to the fact that the probability of $\boldsymbol{u}_{\tau,j}=\boldsymbol{u}'_{\tau',j}$ is particularly small since $\boldsymbol{u}_{\tau,j}$ has a larger number of elements (i.e. more than 500).  
{Here, we note that, compared to our work in \cite{Chen2016Echo} that uses two echo state networks to guarantee convergence, the proposed ESN algorithm uses only one echo state network and is still guaranteed to converge as shown in Theorem 2.}
 { Moreover, this new proof of convergence, we can invoke our result in  the fact that the algorithm will reach a mixed NE follows directly from \cite[Theorem 2]{Chen2016Echo} to guarantee that the convergence point will be a mixed NE, as follows.}   

\begin{corollary}\label{th:3}\emph{ The ESN-based learning algorithm of each SBS $j$ converges to a mixed-strategy Nash equilibrium, with the mixed strategy probability ${\boldsymbol{\pi}_j^*}$, $\forall j \in \mathcal{B}$.}
\end{corollary}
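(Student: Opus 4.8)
The plan is to obtain Corollary~\ref{th:3} as an immediate consequence of Theorem~\ref{th:2} combined with the best-response structure built into the $\varepsilon$-greedy selection rule \refeq{eq:p}, invoking \cite[Theorem~2]{Chen2016Echo} for the step that identifies the limiting profile as a Nash equilibrium. The key observation is that Theorem~\ref{th:2} already guarantees that the ESN output $y_{\tau,ji}$ converges to the true expected utility $\hat u_{\tau,j}(\boldsymbol{a}_{ji})$ for every action of SBS~$j$; hence, after convergence, each SBS holds an asymptotically exact estimate of its own expected payoff against the play of its opponents, so that no separate approximation-error analysis is required and the game-theoretic content can be layered directly on top of the learning guarantee.

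First I would pass to the limit $\tau\to\infty$ and observe that the quantity $\hat u_{\tau,j}(\boldsymbol{a}_{ji})$ being maximized in \refeq{eq:p} coincides with $\bar u_j(\boldsymbol{a}_{ji},\boldsymbol{\pi}_{-j}^*)$, the marginal expected utility of the pure action $\boldsymbol{a}_{ji}$ evaluated at the opponents' converged mixed strategies $\boldsymbol{\pi}_{-j}^*$; this is exactly the defining expression for $\hat u_{\tau,j}$ recorded beneath \refeq{eq:p}. Consequently the action receiving the dominant probability $1-\varepsilon+\varepsilon/|\mathcal{A}_j|$ is, in the limit, a best response of SBS~$j$ to $\boldsymbol{\pi}_{-j}^*$ with respect to $\bar u_j(\cdot,\boldsymbol{\pi}_{-j}^*)$.

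Next I would assemble these per-SBS best-response conditions into a single mutually consistent fixed point. Since every SBS simultaneously concentrates its selection probability on a best response to the others' converged strategies, the resulting profile $\boldsymbol{\pi}^*=(\boldsymbol{\pi}_1^*,\ldots,\boldsymbol{\pi}_B^*)$ admits no profitable unilateral deviation, i.e.\ $\bar u_j(\boldsymbol{\pi}_j^*,\boldsymbol{\pi}_{-j}^*)\ge\bar u_j(\boldsymbol{\pi}_j,\boldsymbol{\pi}_{-j}^*)$ for every $j$ and every $\boldsymbol{\pi}_j$, which is precisely \refeq{eq:mNE}. This is the step where \cite[Theorem~2]{Chen2016Echo} enters verbatim: that result shows that whenever the stationary selection rule is an $\varepsilon$-greedy best response to convergent utility estimates, its fixed points coincide with the mixed-strategy NE set of the game, so the conclusion transfers once Theorem~\ref{th:2} has supplied the convergence of the estimates.

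The main obstacle I anticipate is reconciling the persistent exploration mass with the exact equilibrium inequality: because each SBS always retains probability $\varepsilon/|\mathcal{A}_j|$ on suboptimal actions, the converged profile is a priori only an $\varepsilon$-perturbed best response. I would dispose of this exactly as in \cite{Chen2016Echo}, by noting that the NE characterization is stated for the limiting selection rule and that the exploration term perturbs the fixed point by an amount vanishing as $\varepsilon$ is annealed; the substantive content, namely the uniqueness of the maximizer of $\hat u_{\tau,j}$ in the limit and its identification with the best response, is carried entirely by Theorem~\ref{th:2}, so beyond citing the companion result no additional contraction or fixed-point argument is needed.
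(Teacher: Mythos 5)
Your proposal follows essentially the same route as the paper: the paper likewise treats the corollary as an immediate consequence of Theorem~\ref{th:2} (convergence of the utility estimates) combined with the $\varepsilon$-greedy best-response structure of \refeq{eq:p}, and delegates the identification of the convergence point with a mixed-strategy NE entirely to \cite[Theorem~2]{Chen2016Echo}. If anything you are more careful than the paper, which does not explicitly address the residual $\varepsilon/|\mathcal{A}_j|$ exploration mass and simply records the limiting average utility as $\left(1-\varepsilon+\frac{\varepsilon}{|\mathcal{A}_j|}\right)\hat u_{j,\max}$.
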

Based on (\ref{eq:p}), each SBS will assign the highest probability to the action that results in the maximum value of $\hat u_{j,\max}=\mathop {\max }\limits_{{\boldsymbol{a}_j} \in {\mathcal{A}_j}}\! {\hat{u}_{\tau,j}}\!\left( {{\boldsymbol{a}_j}} \right)$. Therefore, when each SBS $j$ reaches the optimal mixed strategy $\boldsymbol{\pi}_j^*$ and the ESN reaches the maximal utility $\boldsymbol{\hat u}_j$, the maximum value of the average utility $\bar u_j$ can be given by $\left(1 - \varepsilon  + \frac{\varepsilon }{{\left| {{\mathcal{A}_j}} \right|}}\right) \hat u_{j,\max}$. {Since the mixed-strategy NE depends on the utility values that are not unique, then, the resulting mixed-strategy NE is not unique. While characterizing all possible NEs is challenging analytically, in our simulations in Section~IV, we will analyze the performance of different NEs.}  
\subsection{Implementation and Complexity}

The proposed algorithm can be implemented
in a distributed way. At the initial state, the reservoir and output of the ESN will be zero. During each iteration, the output and reservoir of the ESN will be updated based on (\ref{eq:state})-(\ref{eq:update}). Based on the ESN's output, each SBS will update its mixed strategy and broadcast the index of this mixed strategy to other SBSs. {Compared to the size of VR content and tracking information, the size of the index of the strategy is very small (it can be represented with less than 16 bits). Therefore, these interactions between SBSs are independent of the network size and, hence, they incur no notable overhead.} The resulting mixed-strategy NE depends on the utility value resulting from each action of each SBS. 

The objective of our game is to find the equilibrium mixed strategy for each SBS. Hence, the complexity of the proposed algorithm depends on the number of mixed strategies. Based on (\ref{eq:p}), we can see that the number of mixed strategies is equal to the number of actions. Since the worst-case for each SBS is to traverse all actions, the worst-case complexity of the proposed algorithm is $O(\left| {{\mathcal{A}_{1}}}\right| \times \dots \times \left| {{\mathcal{A}_{B}}}\right|)$. However, the worst-case complexity pertains to a rather unlikely scenario in which all SBSs choose their optimal probability strategies after traversing all other mixed strategies during each period $\tau$ and, hence, the probability of occurrence of the worst-case scenario is $\small {\left( {1 - \frac{\varepsilon }{{\left| {{\mathcal{A}_1}} \right|}}} \right)^{\left| {{\mathcal{A}_1}} \right| - 1}} \times  \cdots  \times {\left( {1 - \frac{\varepsilon }{{\left| {{\mathcal{A}_B}} \right|}}} \right)^{\left| {{\mathcal{A}_B}} \right| - 1}}$\footnote{{ Based on (\ref{eq:p}), for each SBS, the probability that the optimal action is not selected at each iteration is $\left( {1 - \frac{\varepsilon }{{\left| {{\mathcal{A}_j}} \right|}}} \right)$and hence, the probability that the optimal action is selected at the last iteration is ${\left( {1 - \frac{\varepsilon }{{\left| {{\mathcal{A}_j}} \right|}}} \right)^{\left| {{\mathcal{A}_j}} \right| - 1}}$. Therefore, the probability of all SBSs select their optimal action at last iteration is $ {\left( {1 - \frac{\varepsilon }{{\left| {{\mathcal{A}_1}} \right|}}} \right)^{\left| {{\mathcal{A}_1}} \right| - 1}} \times  \cdots  \times {\left( {1 - \frac{\varepsilon }{{\left| {{\mathcal{A}_B}} \right|}}} \right)^{\left| {{\mathcal{A}_B}} \right| - 1}}$.  }}. Therefore, the proposed algorithm will converge faster than the worst-case with probability $1-\small {\left( {1 - \frac{\varepsilon }{{\left| {{\mathcal{A}_1}} \right|}}} \right)^{\left| {{\mathcal{A}_1}} \right| - 1}} \times  \cdots  \times {\left( {1 - \frac{\varepsilon }{{\left| {{\mathcal{A}_B}} \right|}}} \right)^{\left| {{\mathcal{A}_B}} \right| - 1}}$. 
Moreover, as the number of SBSs increases, based on Proposition \ref{pro:1}, the number of actions for each SBS significantly decreases. Thus, the worst-case complexity of the proposed algorithm will also decrease. In addition, our ESN-based learning algorithm can use the past information to predict the value of utility $\boldsymbol{\hat u}_{j}$ and, hence, each SBS can obtain the VR QoS of each SBS without implementing all of its possible actions, which will also reduce the complexity of the proposed algorithm. During the convergence process, the proposed algorithm needs to harmonize the tradeoff between exploration and exploitation. {Exploration is used to enable each SBS to explore actions so as to find a better solution.} Exploitation refers to the case in which each SBS will use the current optimal action at this iteration. This tradeoff is controlled by the $\varepsilon$-greedy exploration specificed in (\ref{eq:p}). If we increase the probability of exploration, the proposed algorithm will use more actions and, hence, the number of iterations that the proposed algorithm requires to reach a mixed-strategy NE will decrease. However, increasing the probability of exploration may reduce the VR QoS of each user when the selected action is worse than the current optimal action. 

\section{Simulation Results and Analysis}
For our simulations, we consider an SCN deployed within a circular area with radius $r = 100$ m. $U=25$ users and $B=4$ SBSs are uniformly distributed in this SCN area. The HTC Vive VR device is considered in our simulations and, hence, the number of pixels for a panoramic image is $1920 \times 1080$ and each pixel is stored in 16 bits \cite{timmurphy.org}. The flashed rate which represents the update rate of a VR image, is 60 images per second and the factor of compression is 150 \cite{bacstuug2016towards}. Since two panoramic images consist of one VR image (one panoramic image per eye), the rate requirement for wireless VR transmission will be 25.32 Mbit/s\footnote{{Here, for each second, each SBS needs to transmit $1920\times1080 \times16 \times60 \times2=3,981,312,000$ bits $=3796.875$ Mbits to each user. Since the factor of compression is 150, the rate requirement is ${{3796.875.75} \mathord{\left/
 {\vphantom {{1920} {\left( {8 \times 1024 \times 1024 \times 300} \right)}}} \right.
 \kern-\nulldelimiterspace} {{  150} }}=25.3125$ Mbit/s.}}. {\color{blue}  }
 The bandwidth of each resource block is set to {$10 \times 180$ kHz} \cite{21}. {We use a standard H.264/AVC video codec. The VR video is encoded at 60 frames per second and the encoding rate of each VR video is 4 Mbps.}
The detailed simulation parameters are listed in Table  \uppercase\expandafter{\romannumeral3}. For comparison purposes, we use three baselines: a) Proportional fair algorithm in \cite{kelly1997charging} {that allocates the resource blocks based on the number of the resource blocks needed to maximize each user's utility value}, b) {Q-learning algorithm in \cite{30} that has considered the historical utility value to estimate the future utility value}, and c) Optimal heuristic search algorithm. Note that, in order to compare with the proportional fair algorithm, all SBSs choose the action with highest probability among the mixed strategy when the proposed algorithm reaches a mixed-strategy NE. This is used for all results in which we compare to proportional fair. Here, the users' localization data is measured from actual wired HTC Vive VR devices and the wireless transmission is simulated, in order to compute the tracking accuracy. All statistical results are averaged over a large number of independent runs.

\begin{table}
  \newcommand{\tabincell}[2]{\begin{tabular}{@{}#1@{}}#2\end{tabular}}
\renewcommand\arraystretch{0.6}
 \caption{
    \vspace*{-0.2em}SYSTEM PARAMETERS}\vspace*{-0.8em}
\centering  
\begin{tabular}{|c|c|c|c|c|c|}
\hline
\textbf{Parameter} & \textbf{Value} & \textbf{Parameter} & \textbf{Value}& \textbf{Parameter} & \textbf{Value}  \\
\hline
$F$ & 1000 & $P_B$ & 20 dBm&$\upsilon$&5\\
\hline
$B$ & 4 & $S^\textrm{u}$&  5&$S^\textrm{d}$&5\\
\hline
$N_w$ & 1000 & $\sigma ^2$ & -95 dBm&$r_B$& $25$ m\\
\hline
$ N_{v}$ &6&$\lambda$ & 0.03&A &100 kB \\
\hline
\end{tabular}
 \vspace{-0.5cm}
\end{table}

 \begin{figure}[!t]
  \begin{center}
   \vspace{0cm}
    \includegraphics[width=7cm]{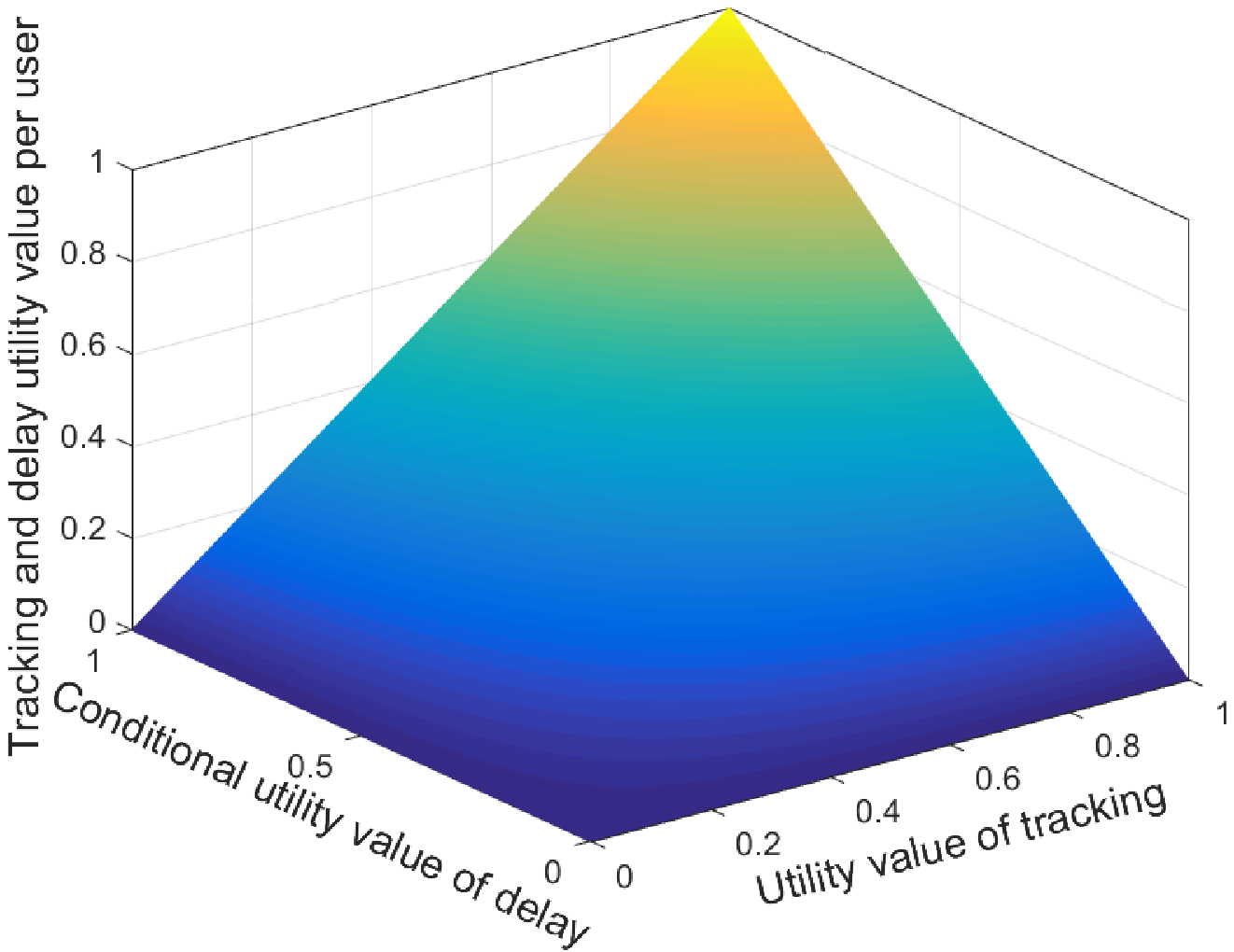}
    \vspace{-0.2cm}
    \caption{\label{figure4}Total VR QoS utility of each user vs. the tracking and delay utilities. Here, total VR QoS utility refers to (\ref{eq:Ut}). }
  \end{center}\vspace{-0.6cm}
\end{figure}
\vspace{-0cm}

Fig. \ref{figure4} shows how each user's VR QoS utility varies as the tracking and delay utilities change. In Fig. \ref{figure4}, different colors indicate different total VR QoS utilities. From Fig. \ref{figure4}, we can see that, when the delay (tracking) utility is 0, the total VR QoS utility will be 0 regardless of the tracking (delay) value. Thus, both tracking accuracy and delay will affect the VR QoS. In Fig. \ref{figure4}, we can also see that only when both tracking and delay utilities are 1, the total VR QoS utility is maximized. This is due to the fact that the multi-attribute utility theory model assigns to each tracking and delay components of the VR QoS a unique value. Clearly, it is clear that, the proposed total utility function can effectively capture the VR QoS.


 \begin{figure}[!t]
  \begin{center}
   \vspace{0cm}
    \includegraphics[width=7cm]{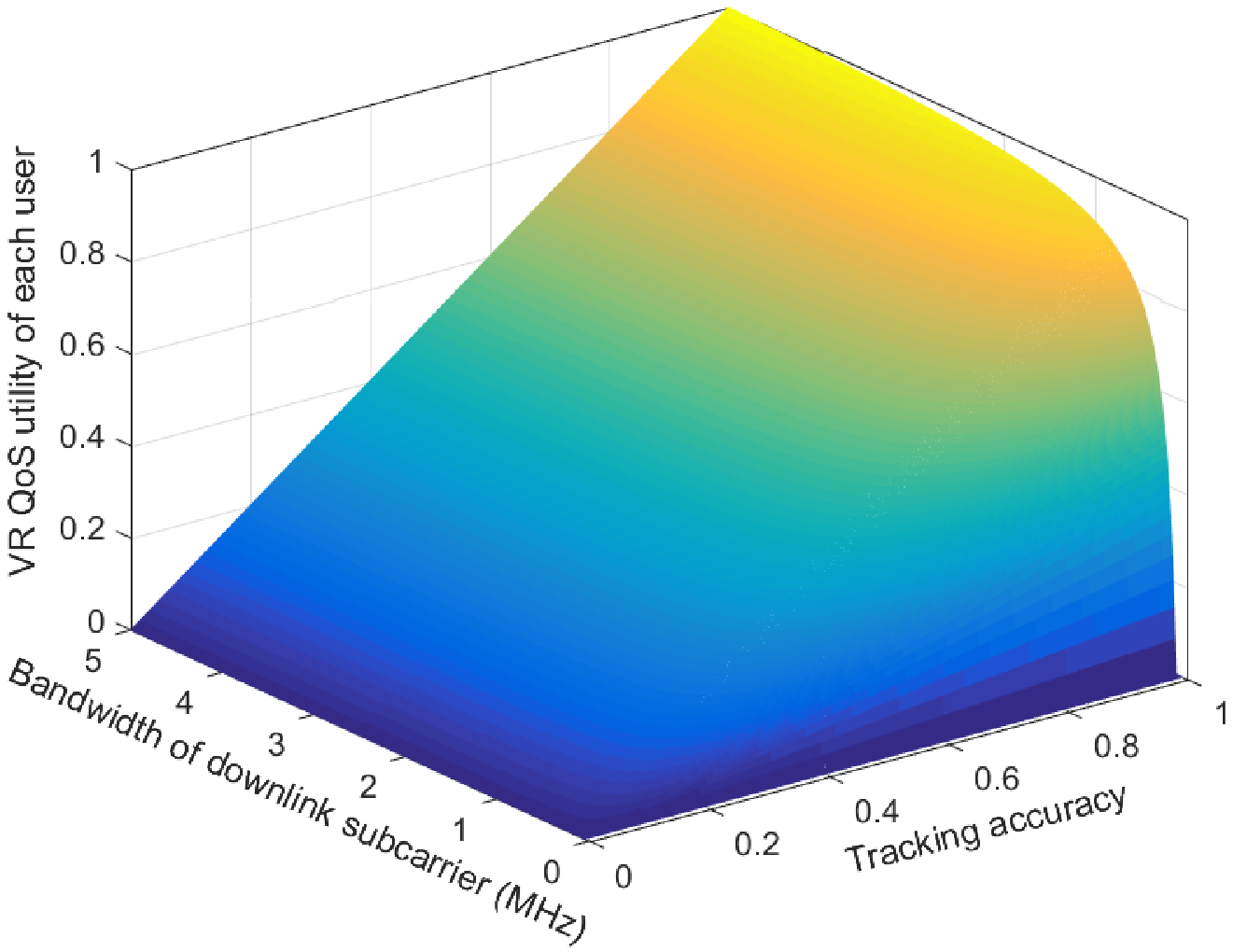}
    \vspace{-0.2cm}
    \caption{\label{figure5}Total VR QoS utility of each user vs. the tracking and delay utilities. Here, total VR QoS utility refers to (\ref{eq:Ut}). }
  \end{center}\vspace{-0.7cm}
\end{figure}
\vspace{-0cm}

{Fig. \ref{figure5} shows how each user's total utility changes as function of the tracking accuracy and the bandwidth of each downlink resource block group (Fig. \ref{figure5} uses a similar color legend as Fig. \ref{figure4}). From Fig. \ref{figure5}, we can see that when the bandwidth of downlink resource block group (tracking accuracy) is 0, the total VR QoS utility is 0 regardless of the tracking accuracy (bandwidth of a downlink resource block group). This is due to the fact that the VR QoS depends on both delay and tracking. This corresponds to a scenario in which SBS $j$ has enough downlink bandwidth to send a VR image to the user while the tracking information is inaccurate. In this case, SBS $j$ cannot construct the accurate VR image due to the inaccuracy of user's localization and, hence, the VR QoS of this user will be 0. Fig. \ref{figure5} also shows that, when the tracking accuracy is 1 and the bandwidth of a downlink resource block group is over 4 MHz, the total VR QoS utility will be maximized. This verifies the result of Theorem 1. }                

 \begin{figure}
\centering
\vspace{0cm}
\subfigure[Average delay utility of each serviced user]{
\label{figure11a} 
\includegraphics[width=7cm]{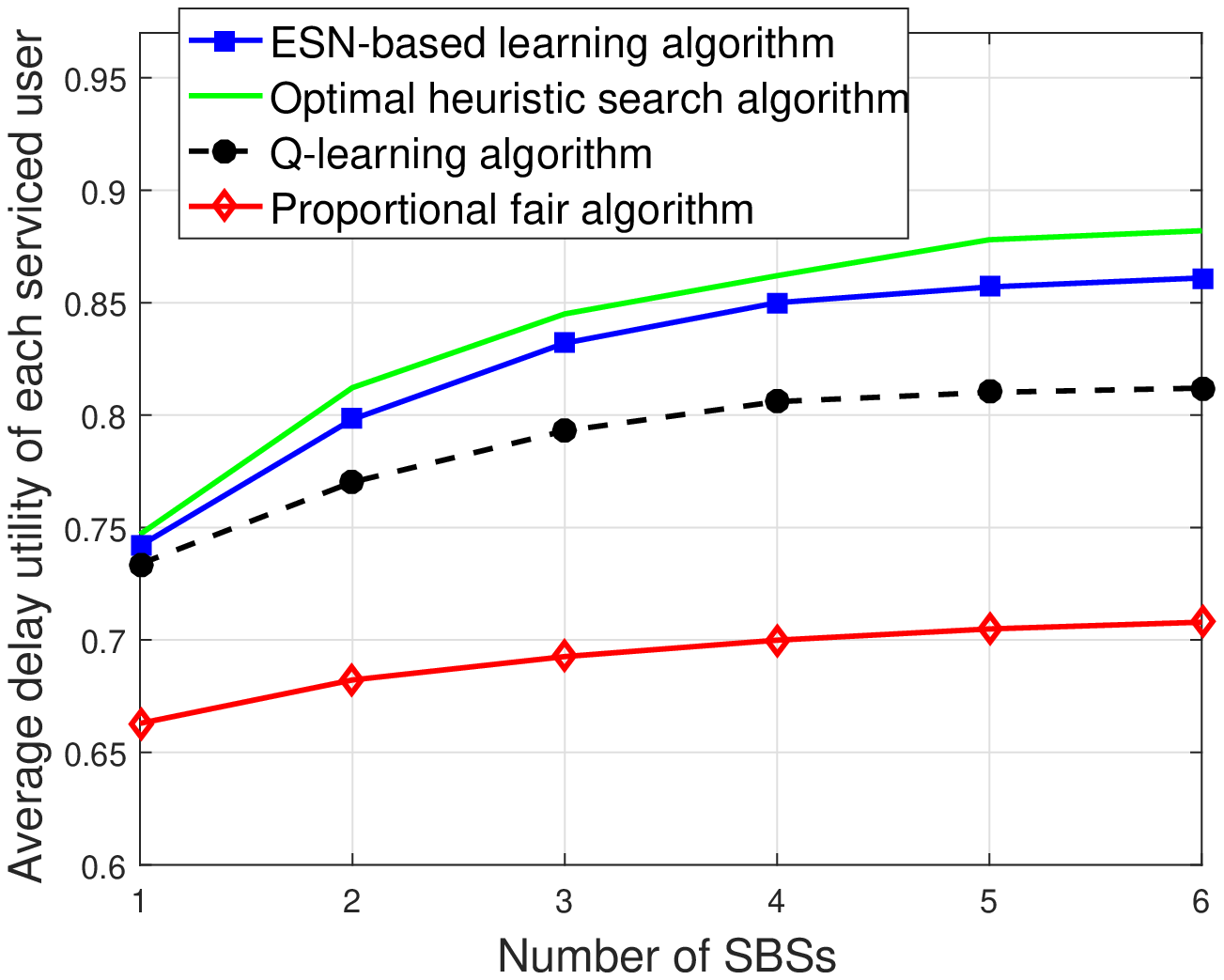}}
\subfigure[Average delay of each serviced user]{
\label{figure11b} 
\includegraphics[width=7cm]{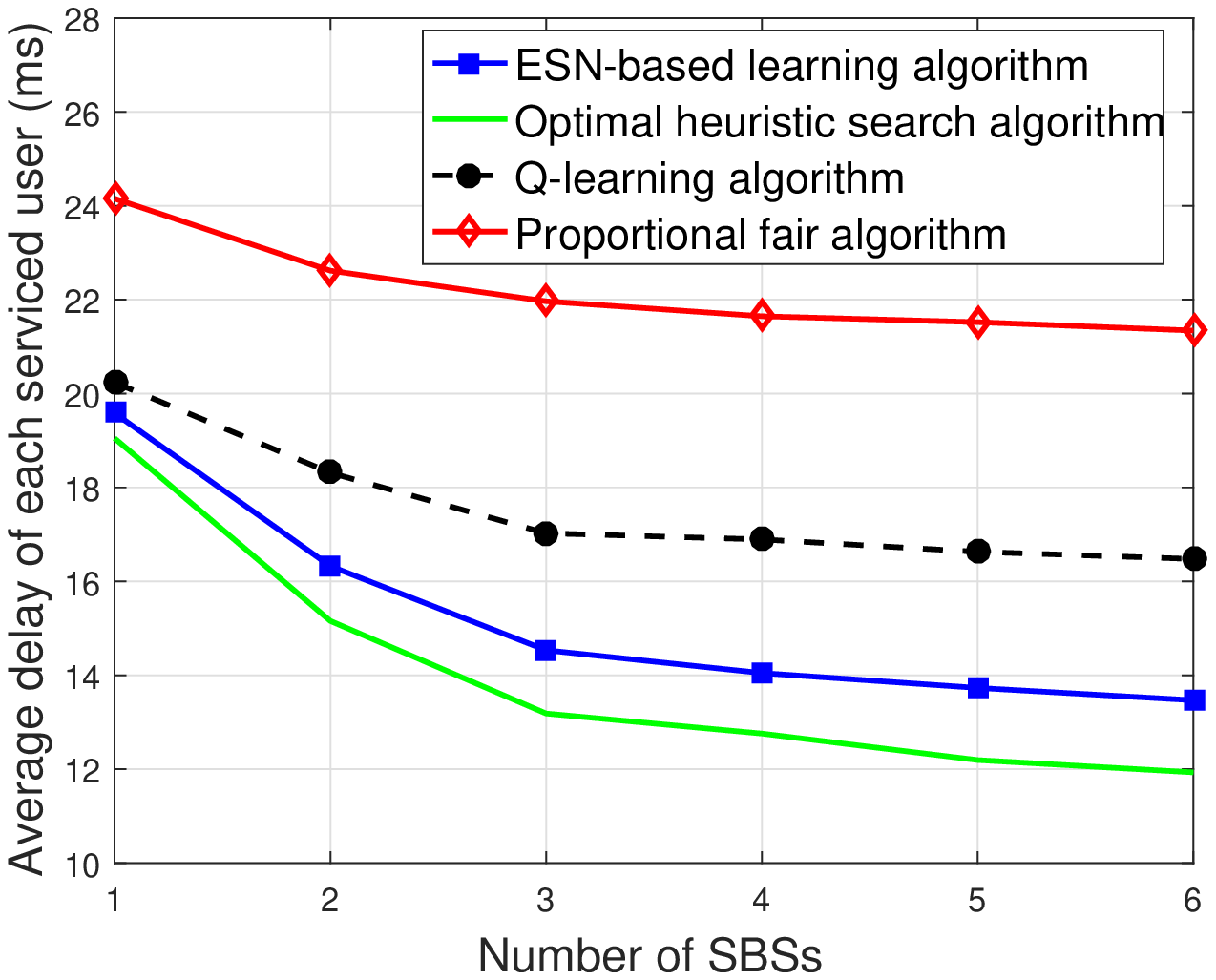}}
  \vspace{-0.2cm}
 \caption{\label{figure11} {Delay for each serviced user vs. the number of SBSs.}}
  \vspace{-0.5cm}
\end{figure}

In Fig. \ref{figure11}, we show how the average delay utility for each user varies as the number of SBSs increases. From Figs. \ref{figure11a} and \ref{figure11b}, we can see that, as the number of SBSs increases,  the average delay utility for each serviced user increases and the transmission delay of each user decreases.      
This is due to the fact that as the number of SBSs increases, the number of users located in each SBS's coverage decreases and, hence, the average delay utility increases. However, as the number of SBSs keeps increasing, the average delay utility increases slowly. This stems from the fact that the interference from the SBSs to the users increases as the number of SBSs continues to increase. Fig. \ref{figure11b} also shows that the proposed algorithm  achieves up to {19.6\%} gain in terms of average delay compared to the Q-learning algorithm for the case with 6 SBSs. In Fig. \ref{figure11b}, {we can also see that the proposed ESN-based learning algorithm enables the wireless VR transmission to meet typical delay requirement of VR applications that consists of both the transmission delay and processing delay (typically 20 ms \cite{WhatVRMichael})}. These gains stem from the fact that the proposed algorithm uses the past ESN information stored at the ESN model to find a better solution for the proposed game.   

 \begin{figure}[!t]
  \begin{center}
   \vspace{0cm}
    \includegraphics[width=7cm]{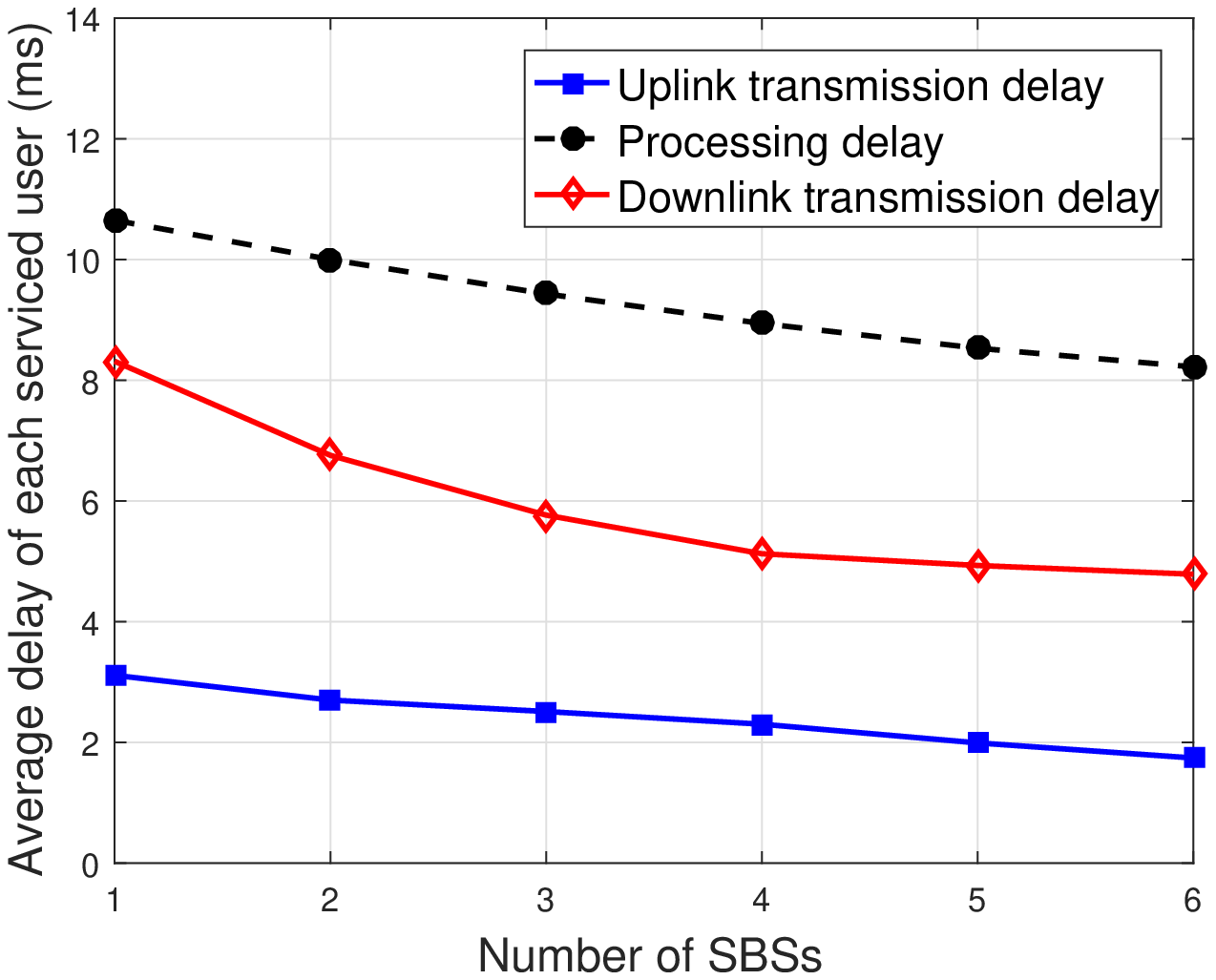}
    \vspace{-0.2cm}
    \caption{{\label{delay}Delay for each serviced user vs. the number of SBSs.}}
  \end{center}
    \vspace{-0.8cm}
\end{figure}

{Fig. \ref{delay} shows how the transmission and processing delays change as the number of the SBSs varies. From Fig. \ref{delay}, we can see that all delay components of each user decrease as the number of SBSs increases. This is due to the fact that, as the number of SBSs increases, the users will have more SBS choices and the distances from the SBSs to the users decrease, and, hence, the SINR and the potential resources (resource blocks) allocated to each user will increase.} 

{
 \begin{figure}[!t]
  \begin{center}
   \vspace{0cm}
    \includegraphics[width=7cm]{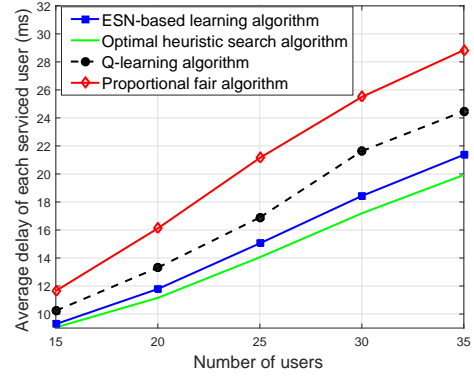}
    \vspace{-0.3cm}
    {\caption{{\label{figure19}Average delay for each serviced user as the number of users varies.}}}
  \end{center}
    \vspace{-0.7cm}
\end{figure}

In Fig. \ref{figure19}, we show how the average delay for each serviced user changes as the number of VR users varies. From Fig. \ref{figure19}, we can see that, as the number of VR users increases, the average delay of each VR user increases. {This is due to the fact that, as the number of users increases, the number of resource blocks that is allocated to each VR user decreases.} Fig. \ref{figure19} also shows that the deviation between the proposed algorithm and Q-learning increases as the number of VR users increases.
This is due to the fact that, as the number of users increases, the number of the users associated with each SBS increases. In consequence, the number of actions for each SBS increases and, hence, the SBSs need to record more QoS values resulting from these actions. Compared to Q-learning that uses a matrix to record the QoS values, the proposed algorithm uses an ESN to approximate the function of QoS values and, hence, the proposed algorithm can record more QoS utility values compared to Q-learning. Fig. \ref{figure19} also shows that the proposed algorithm can yield 29.8\% gain of the average delay compared to proportional fair algorithm. This gain stems from the fact that the proposed algorithm can find the relationship between resource block allocation strategies and utility values so as to maximize the utility values.}

 \begin{figure}[!t]
  \begin{center}
   \vspace{0cm}
    \includegraphics[width=7cm]{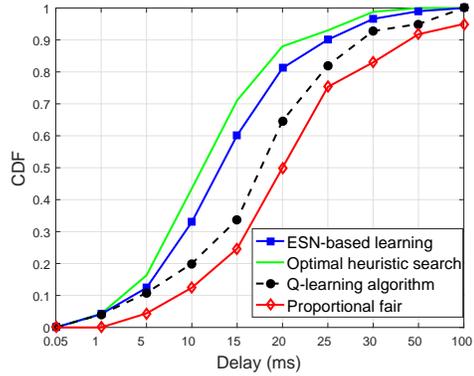}
    \vspace{-0.3cm}
    \caption{\label{cdf}{CDFs of the delay resulting from the different algorithms. }}
  \end{center}\vspace{-0.7cm}
\end{figure}
\vspace{-0cm}
 {Fig. \ref{cdf} shows the cumulative distribution function (CDF) for the total delay resulting from all of the considered schemes. In Fig. \ref{cdf}, we can see that, the delay of 100\% of users resulting from all of the considered algorithms will be above 0.05 ms. This is due to the fact that, the delay requirement of each user is higher than 0.05 ms and, hence, when the user's delay requirement is satisfied, the SBSs will allocate the resource blocks to other users. Fig. \ref{cdf} also shows that the proposed approach improves the CDF of up to 25\% and 50\% gains at a delay of 20 ms compared to Q-learning and proportional fair algorithm. These gains stem from the fact that the proposed algorithm can estimate the VR QoS resulting from each SBS's actions accurately and, hence, can find a better solution compared to Q-learning and proportional fair algorithms.} {Fig. \ref{cdf} also shows that there exists a delay variance for users who achieve the $20$ ms delay target. If needed, the network can reduce this variance by adjusting the size of the resource blocks.}

 \begin{figure}
\centering
\vspace{0cm}
\subfigure[Average total VR QoS utility for all users.]{
\label{fig9a} 
\includegraphics[width=7cm]{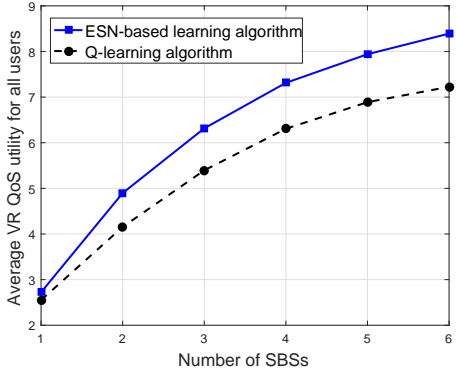}}
\subfigure[Total VR QoS utility for all users]{
\label{fig9b} 
\includegraphics[width=7cm]{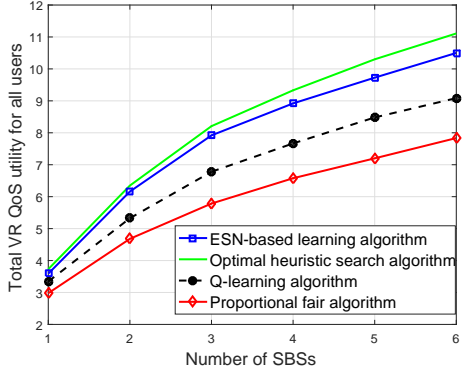}}
  \vspace{-0.2cm}
 \caption{\label{figure9} {VR QoS for all users vs. the number of SBSs. Here, total VR QoS utility refers to (\ref{eq:u})}. {The proportional fair algorithm and optimal heuristic search do not use mixed strategies and, hence, they are not shown in Fig. \ref{fig9a}.}}
  \vspace{-0.7cm}
\end{figure}

Fig. \ref{figure9} shows how the VR QoS for all users changes as the number of SBSs varies. 
 From Figs. \ref{fig9a} and \ref{fig9b}, we can see that both total utility values and average total utility values (at the mixed-strategy NE) of all considered algorithms increase as the number of SBSs increases. This is due to the fact that, as the number of SBSs increases, the number of users located within the coverage of each SBS increases and the distances from the SBSs to their associated users decrease. Fig. \ref{fig9a} shows that the proposed algorithm can yield up to of {15.3\%} gain in terms of the average of total VR QoS utility compared to the Q-learning for the case with 5 SBSs. In Fig. \ref{fig9b}, we can also see that the proposed ESN-based learning algorithm achieves, respectively, up to {17.1\%} and {36.7\%} improvements in terms of the total utility value compared to Q-learning and proportional fair algorithms for the case with 4 SBSs. These gains are due to the fact that our ESN algorithm can store the past ESN information and use it to build the relationship between the input and output. Hence, the proposed learning algorithm can predict the output (utility value) and, hence, find a better solution for allocating resources.

 \begin{figure}[!t]
  \begin{center}
   \vspace{0cm}
    \includegraphics[width=7cm]{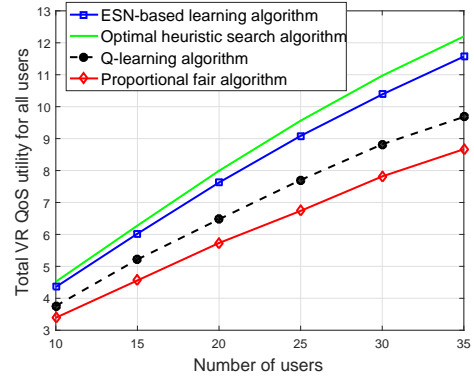}
    \vspace{-0.3cm}
    \caption{\label{figure10}The total VR QoS utility for all users vs. total number of users. Here, the total VR QoS utility refers to (\ref{eq:u}).}
  \end{center}\vspace{-0.7cm}
\end{figure}
\vspace{-0cm}

 \begin{figure}[!t]
  \begin{center}
   \vspace{0cm}
    \includegraphics[width=7cm]{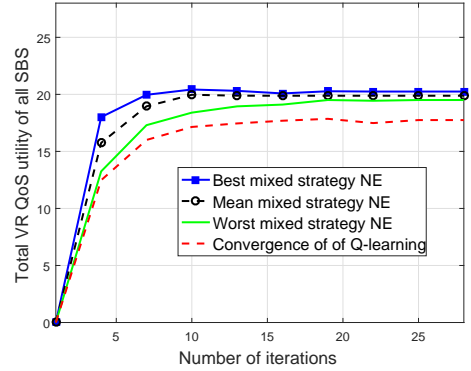}
    \vspace{-0.3cm}
    \caption{\label{figure6}{Convergence of the proposed algorithm and Q-learning. Here, total VR QoS utility refers to (\ref{eq:u})}}
  \end{center}\vspace{-0.7cm}
\end{figure}
\vspace{-0cm}

 \begin{figure}[!t]
  \begin{center}
   \vspace{0cm}
    \includegraphics[width=7cm]{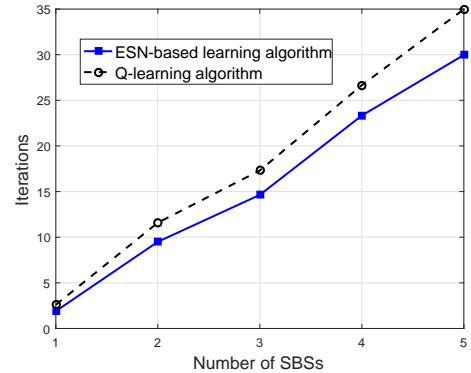}
    \vspace{-0.3cm}
    \caption{\label{figure7}The convergence time as a function of the number of SBSs. }
  \end{center}\vspace{-0.9cm}
\end{figure}
\vspace{-0cm}

\begin{figure*}[!t]
\centering
\vspace{0cm}
\subfigure[Optimal action of the ESN-based algorithm for the users over downlink and uplink]{
\label{figure8a} 
\includegraphics[width=6.5cm]{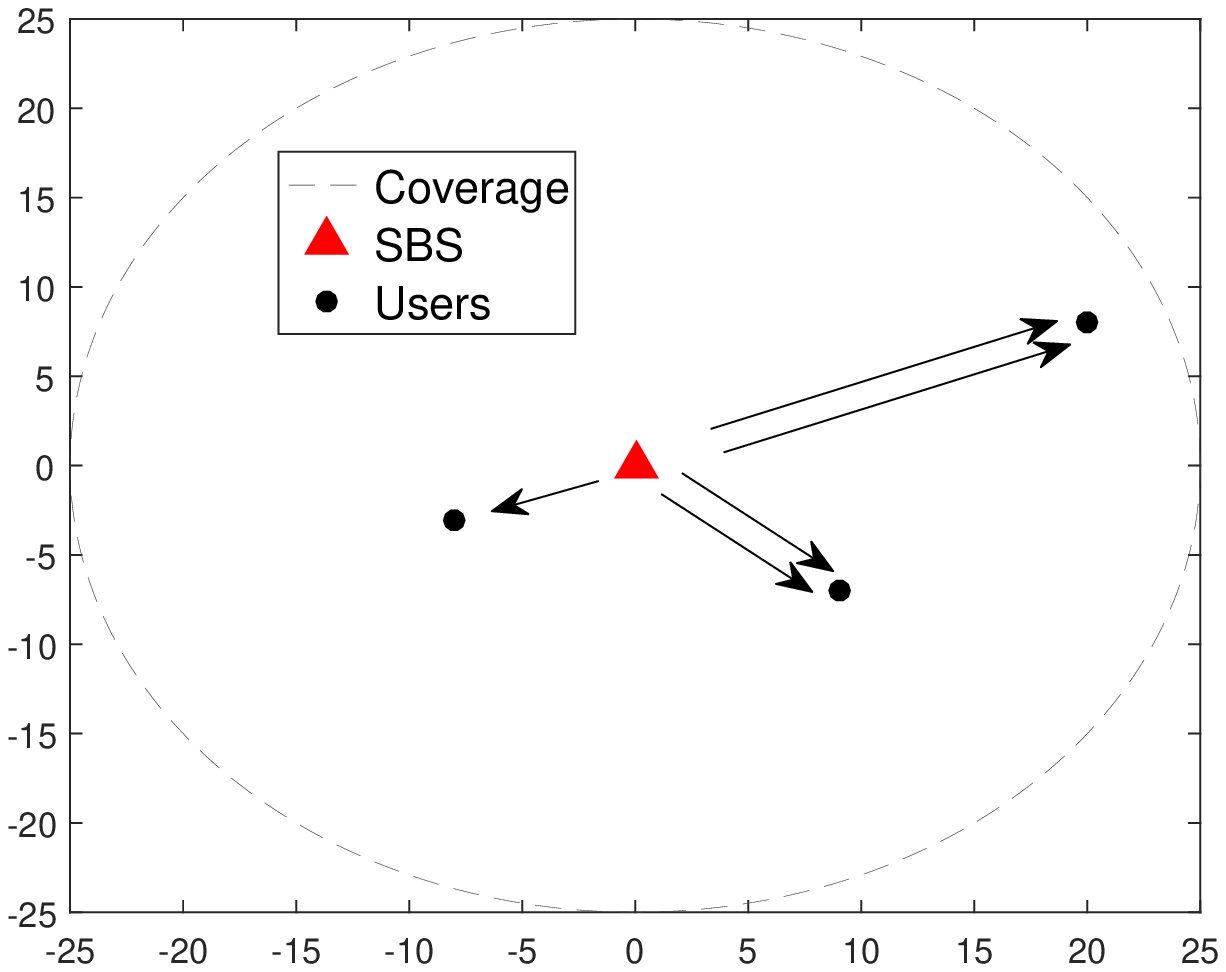} 
\label{figure8b}
\includegraphics[width=6.5cm]{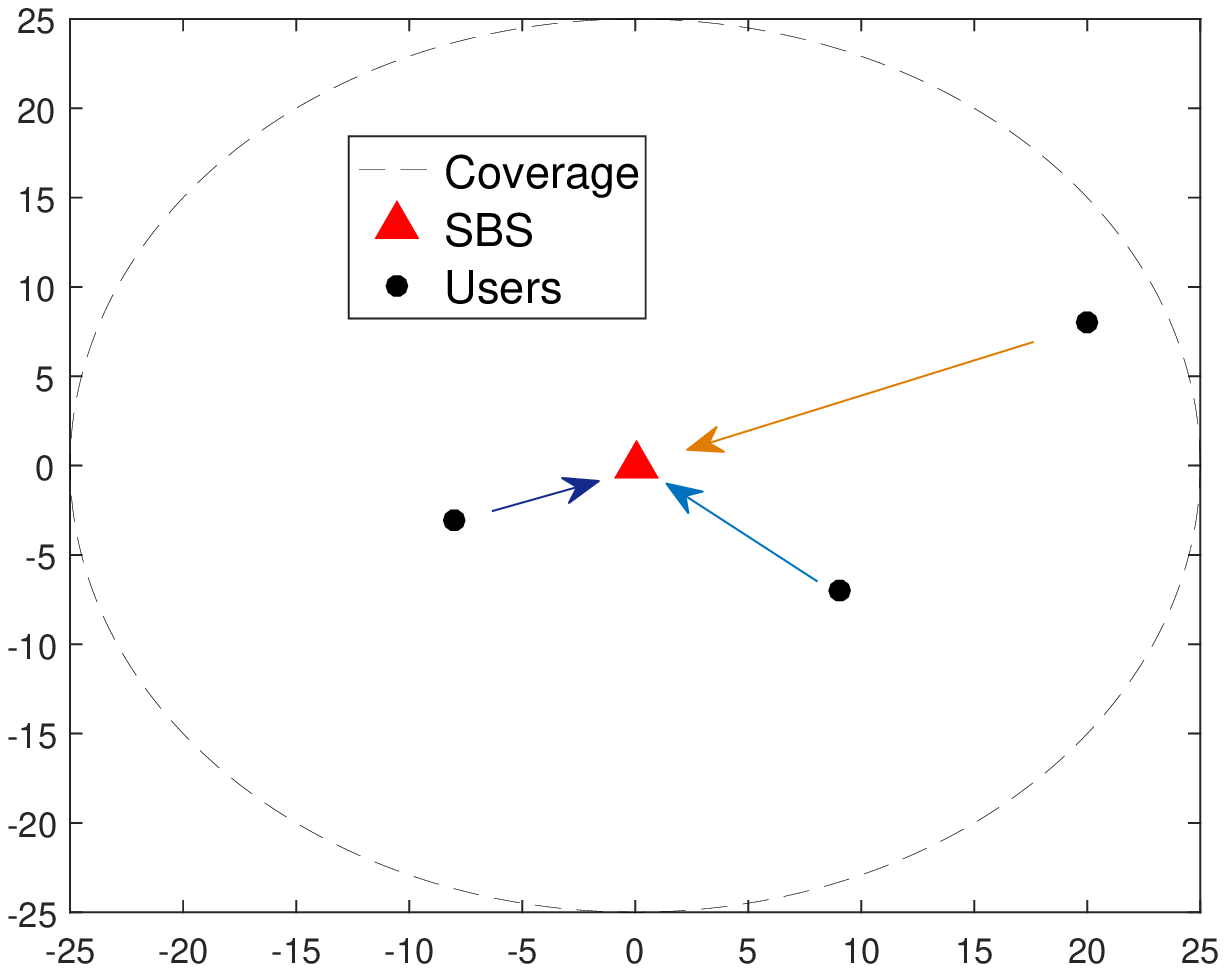}}
\subfigure[Optimal action of proportional fair algorithm for the users over downlink and uplink]{
\label{figure8c} 
\includegraphics[width=6.5cm]{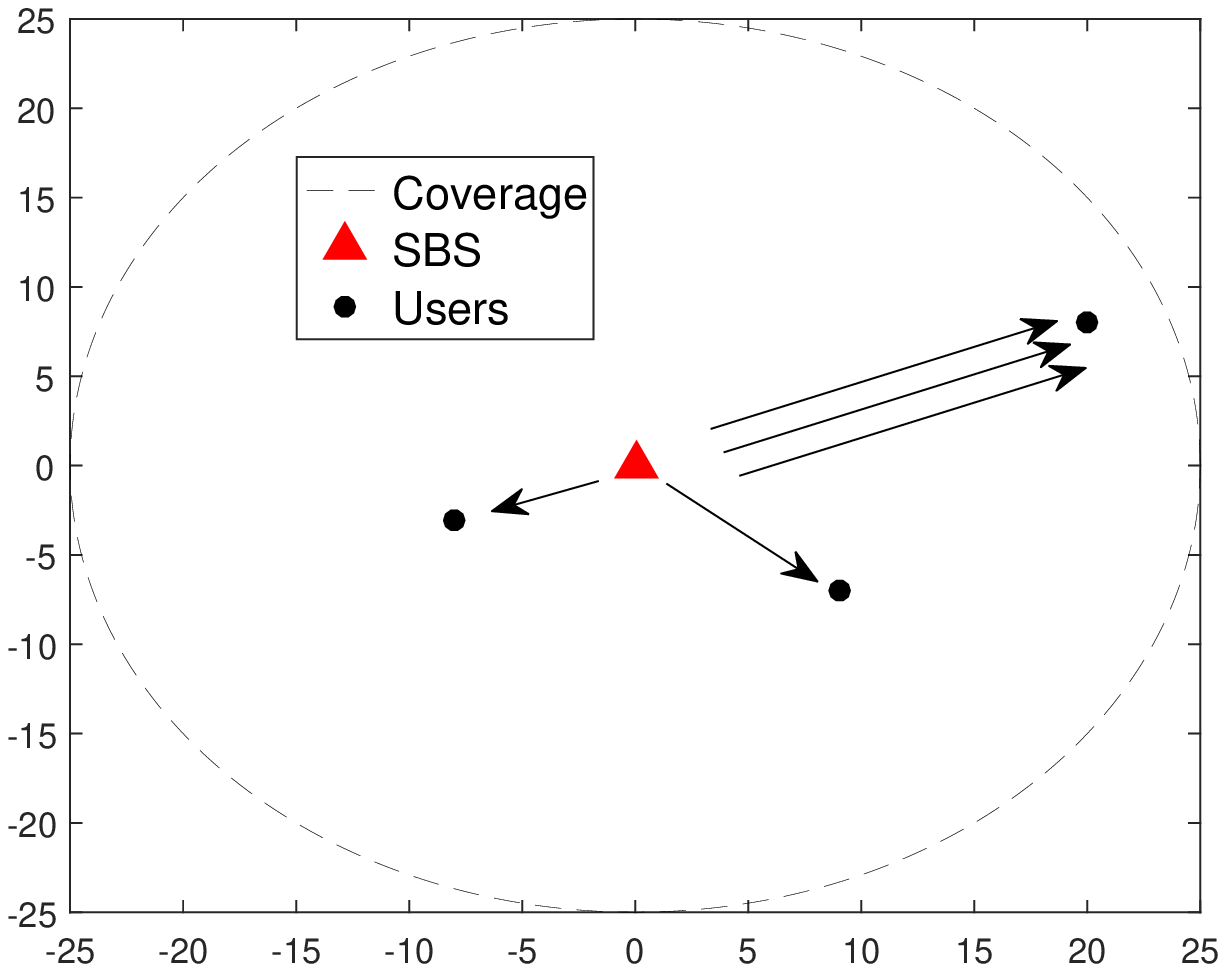} \includegraphics[width=6.5cm]{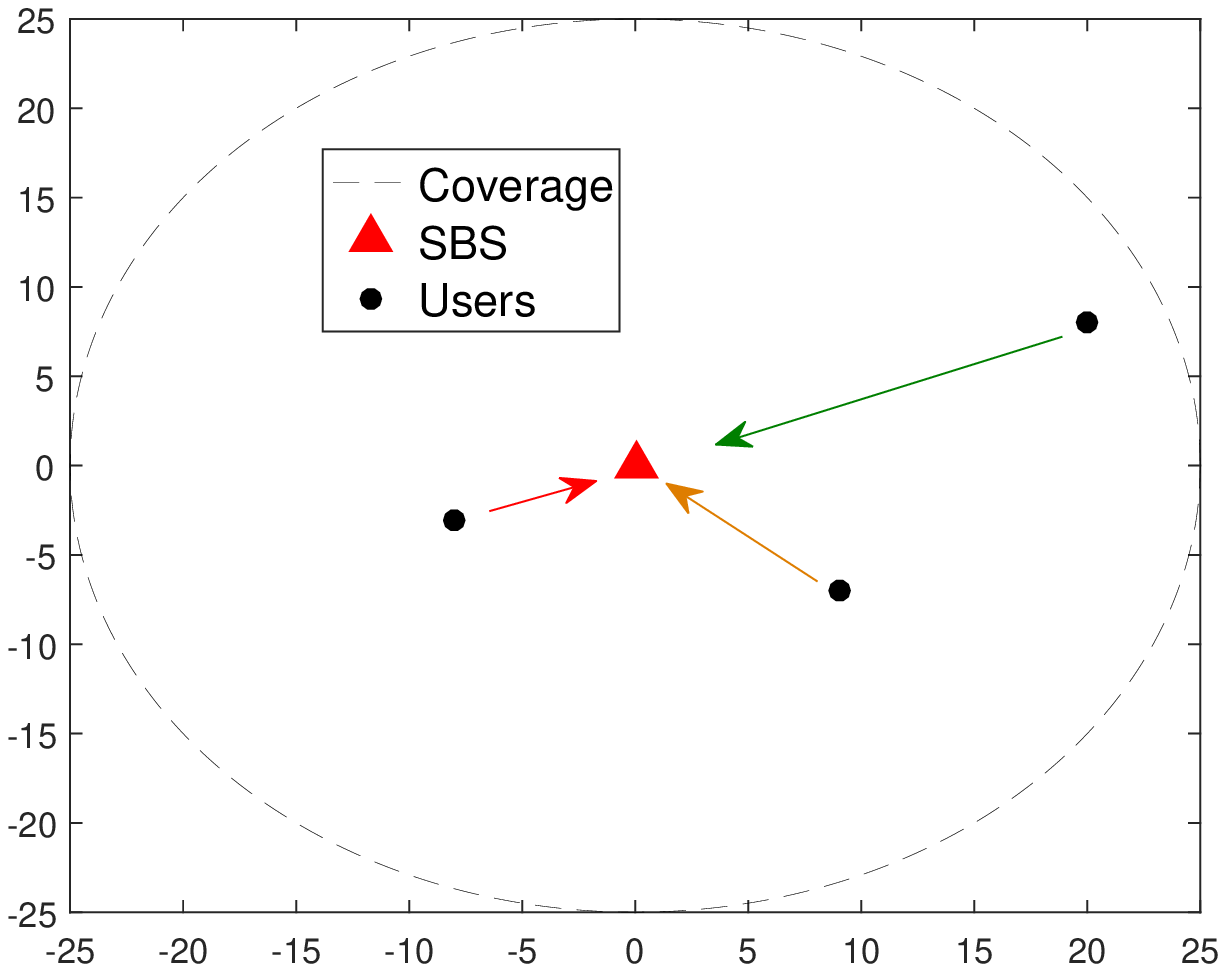}}
  \vspace{-0.1cm}
 \caption{\label{figure8} Optimal actions resulting from the different algorithms (Each black arrow represents the downlink resource blocks while each color arrow represents an group of uplink resource blocks).}
 \vspace{-0.6cm}
\end{figure*}

In Fig. \ref{figure10}, we show how the total utility value of VR QoS for all users changes as the total number of users varies. From Fig. \ref{figure10}, we can see that, as the number of users increases, the total utility values of all considered algorithms increase. This is due to the fact that, in all algorithms, each SBS has a limited coverage area and, hence, the number of users located in each SBS's coverage increases with the network size. Moreover, since each SBS has a limited number of resource blocks, the number of users that can associate with each SBS is also bounded. In particular, as the number of users located within the coverage of a given SBS exceeds the maximal number of users that each SBS can provide service, the SBS will only service the users that can maximize the total utility value. The VR QoS of the remaining users will be 0. In this case, the total utility value will also increase with the number of the users. From Fig. \ref{figure10}, we can also see that the proposed algorithm achieves, respectively, up to {22.2\%} and {37.5\%} gains in terms of the total utility value of VR QoS compared to Q-learning and proportional fair algorithms for the case with 35 users. 


Fig. \ref{figure6} shows the number of iterations needed till convergence for both Q-learning and the proposed ESN-based learning approach {with different mixed-strategy NEs} and Q-learning. In this figure, we can see that, as time elapses, the total VR QoS utilities for both the proposed algorithm and Q-learning increase until convergence to their final values. Fig. \ref{figure6} also shows that the proposed algorithm for best mixed strategy NE needs {19} iterations to reach convergence while Q-learning needs 25 iterations to reach convergence. Hence, the proposed algorithm achieves {24.2\%} gain in terms of the number of the iterations needed to reach convergence compared to Q-learning. This is because the ESN in the proposed algorithm can store the SBSs' action strategies and its corresponding total utility values. 
{Fig. \ref{figure6} also shows that, different mixed-strategy NEs will result in different total utility values of all SBSs. However, the total utility values achieved by these mixed-strategy NEs are very close to each other. Moreover, Fig. \ref{figure6} also shows that the worst mixed-strategy NE resulting from the proposed algorithm can still achieve 14\% gain of the total utility value compared to Q-learning. Hence, regardless of the reached NE, the total utility value of the proposed algorithm will be higher than the utility value achieved by Q-learning.}

In Fig. \ref{figure7}, we show how the convergence time changes as the number of SBSs varies. In this figure, we can see that as the number of the SBSs increases, the convergence time of both algorithms increases. Indeed, as the number of SBSs increases, the proposed ESN algorithm will require more time to accurately calculate the VR QoS utility. From Fig. \ref{figure7}, we can also see that as the number of SBSs increases, the difference in the convergence time between the proposed algorithm and Q-learning increases. This stems from the fact that as the number of SBSs increases, the number of actions for each SBS decreases and, hence, the number of output weight matrix used to predict the VR QoS utility for each action decreases.

Fig. \ref{figure8} shows the optimal actions resulting from the proposed ESN-based learning algorithm and proportional fair algorithm. Here, each color arrow represents a unique group of uplink resource blocks. From Figs. \ref{figure8a} and \ref{figure8c}, we can see that, for downlink resource allocation, the proportional fair algorithm allocates most of the downlink resource blocks to the user located farthest to the SBS while the proposed learning algorithm allocates only two groups of resource blocks to the farthest user. This is due to the fact that the proportional fair algorithm only considers the users' resource blocks demands while the proposed learning algorithm considers how to maximize the total utility values of VR QoS for all associated users. Figs. \ref{figure8a} and \ref{figure8c} also show that, both the proposed ESN-based learning algorithm and proportional fair algorithm allocate three groups of resource blocks to the farthest user. However, the uplink resource blocks allocated to each user are different. This is due to the fact that, in uplink, the proportional fair algorithm only considers the users' resource blocks demands while ignoring the uplink interference pertaining to the allocation of uplink resource blocks. In this case, the interference of users in uplink will significantly decrease the total VR QoS utility for each user. Note that, since each SBS allocates their all downlink and uplink resource blocks to its associated users, the interference of the users in downlink will not change as the actions vary while the interference in uplink depends on the actions.



\section{Conclusion}
In this paper, we have developed a novel multi-attribute utility theory based VR model that can capture the tracking and delay components of VR QoS. Based on this model, we have proposed a novel resource allocation framework for optimizing the VR QoS for all users. We have formulated the problem as a noncooperative game between the SBSs that seeks to maximize the average VR QoS utilities for all users. To solve this problem, we have developed a novel algorithm based on the machine learning tools of echo state networks. The proposed algorithm enables each SBS to decide on its actions autonomously according to the users' and networks' states. Moreover, the proposed learning algorithm only needs to update the mixed strategy during the training process and, hence, can quickly converge to a mixed-strategy NE. Simulation results have shown that the proposed VR model can capture the VR QoS in wireless networks while providing significant performance gains.

\section*{Appendix}
\subsection{Proof of Theorem \ref{th:1}} 
For \romannumeral1), the gain that stems from increasing the number of uplink resource blocks allocated to user $i$, $\Delta U_i^\textrm{u}$, is:
\begin{equation}\small\label{eq:Ud}
\begin{split}
\!\!\!\!\!\!&U_i\!\left(\! {{D_i}\!\!\left( \!{{\boldsymbol{s}_{ij}^\textrm{d},\boldsymbol{s}_{ij}^\textrm{u}\!\!+\!\!\Delta\boldsymbol{s}_{ij}^\textrm{u}}} \right)\!,{K_i}\!\!\left(\! \boldsymbol{s}_{ij}^\textrm{u}\!+\!\Delta\boldsymbol{s}_{ij}^\textrm{u} \right)}\! \right)\!\!-\!U_i\!\!\left( \!{{D_i}\!\!\left( \!{{\boldsymbol{s}_{ij}^\textrm{d},\boldsymbol{s}_{ij}^\textrm{u}}} \!\right)\!,{K_i}\!\left( \boldsymbol{s}_{ij}^\textrm{u} \right)} \!\right)\\
=&K_i\left(\boldsymbol{s}_{ij}^\textrm{u}+\Delta\boldsymbol{s}_{ij}^\textrm{u}\right)\times\frac{{{D_{\max ,i}\left( \boldsymbol{s}_{ij}^\textrm{u}+\Delta\boldsymbol{s}_{ij}^\textrm{u}\right)}   - {D_i}\left( {{\boldsymbol{s}_{ij}^\textrm{d},\boldsymbol{s}_{ij}^\textrm{u}+\Delta\boldsymbol{s}_{ij}^\textrm{u}}} \right)}}{{D_{\max ,i}\left( \boldsymbol{s}_{ij}^\textrm{u}+\Delta\boldsymbol{s}_{ij}^\textrm{u}\right) - \gamma_{D}}}
\\&-K_i\left(\boldsymbol{s}_{ij}^\textrm{u}\right)\times \frac{{{D_{\max ,i}\left( \boldsymbol{s}_{ij}^\textrm{u}\right)}   - {D_i}\left( {{\boldsymbol{s}_{ij}^\textrm{d},\boldsymbol{s}_{ij}^\textrm{u}}} \right)}}{{D_{\max ,i}\left( \boldsymbol{s}_{ij}^\textrm{u}\right) - \gamma_{D}}},
\end{split}
\end{equation}
where (\ref{eq:Uiu}) is obtained by substituting (\ref{eq:dt}) and (\ref{eq:totaldelay}) into (\ref{eq:Ud}). Since $K_i\left(x\right)$ is determined by the bit errors due to the wireless transmission, (\ref{eq:Uiu}) cannot be further simplified.

For \romannumeral2), the gain of changing the downlink resource blocks, $\Delta U_i^\textrm{d}$, can be given by: 
\begin{equation}\nonumber
\begin{split}
\Delta U_i^\textrm{d}&={K_i}\left( {\boldsymbol{s}_{ij}^\textrm{u}} \right) \times \frac{{{D_i}\left( {\boldsymbol{s}_{ij}^\textrm{d},\boldsymbol{s}_{ij}^\textrm{u}} \right) - {D_i}\left( {\boldsymbol{s}_{ij}^\textrm{d} + \Delta \boldsymbol{s}_{ij}^\textrm{d},\boldsymbol{s}_{ij}^\textrm{u}} \right)}}{{{D_{\max }}\left( {\boldsymbol{s}_{ij}^\textrm{u}} \right) - {\gamma _{D}}}}\\&={K_i}\left( {\boldsymbol{s}_{ij}^\textrm{u}} \right)\times \frac{{D_i^T\left( {\boldsymbol{s}_{ij}^\textrm{d}} \right) - D_i^T\left( {\boldsymbol{s}_{ij}^\textrm{d} + \Delta s_{ij}^\textrm{d}} \right)}}{{{D_{\max ,i}}\left( {s_{ij}^\textrm{u}} \right) - {\gamma _{D}}}},\\
&=\frac{{{K_i}\left( {\boldsymbol{s}_{ij}^\textrm{u}} \right)}}{{{D_{\max ,i}}\left( {\boldsymbol{s}_{ij}^\textrm{u}} \right) - {\gamma _{D}}}} \times \frac{{{L}{c_{ij}}\left( {\Delta \boldsymbol{s}_{ij}^\textrm{d}} \right)}}{{{c_{ij}}\left( {\boldsymbol{s}_{ij}^\textrm{d}} \right){c_{ij}}\left( {\boldsymbol{s}_{ij}^\textrm{d} + \Delta \boldsymbol{s}_{ij}^\textrm{d}} \right)}}\\
&=\frac{{{K_i}\left( {\boldsymbol{s}_{ij}^\textrm{u}} \right)L}}{{{D_{\max ,i}}\left( {\boldsymbol{s}_{ij}^\textrm{u}} \right)\! -\! {\gamma _{D}}}}\! \times\! \frac{{{c_{ij}}\left( {\Delta \boldsymbol{s}_{ij}^\textrm{d}} \right)}}{{{c_{ij}}{{\left( {\boldsymbol{s}_{ij}^\textrm{d}} \right)}^2}\! \!+\! {c_{ij}}\!\left( {\boldsymbol{s}_{ij}^\textrm{d}} \right){c_{ij}}\!\left( {\Delta \boldsymbol{s}_{ij}^\textrm{d}} \right)}}.
\end{split}
\end{equation}
Here, when ${c_{ij}}\!\left( {\Delta \boldsymbol{s}_{ij}^\textrm{d}} \right) \!\!\gg\!\!  {c_{ij}}\!\left( {\boldsymbol{s}_{ij}^\textrm{d}} \right)$, $\frac{{{c_{ij}}\left( {\Delta \boldsymbol{s}_{ij}^\textrm{d}} \right)}}{{{c_{ij}}{{\left( {\boldsymbol{s}_{ij}^\textrm{d}} \right)}^2} + {c_{ij}}\left( {\boldsymbol{s}_{ij}^\textrm{d}} \right){c_{ij}}\left( {\Delta \boldsymbol{s}_{ij}^\textrm{d}} \right)}} \!\!\approx \!\! \frac{1}{{{c_{ij}}\left( {\boldsymbol{s}_{ij}^\textrm{d}} \right)}}$, and, consequently, $\Delta U_i^\textrm{d}=\frac{{{K_i}\left( {\boldsymbol{s}_{ij}^\textrm{u}} \right){L}}}{{\left( {{D_{\max ,i}}\left( {\boldsymbol{s}_{ij}^\textrm{u}} \right) - {\gamma _{D}}} \right){c_{ij}}\left( {\boldsymbol{s}_{ij}^\textrm{d}} \right)}}$. 
Moreover, as ${c_{ij}}\!\left( \!{\Delta \boldsymbol{s}_{ij}^\textrm{d}} \!\right) \ll  {c_{ij}}\!\left(\! {\boldsymbol{s}_{ij}^\textrm{d}} \right)$, $\frac{{{c_{ij}}\left( {\Delta \boldsymbol{s}_{ij}^\textrm{d}} \right)}}{{{c_{ij}}{{\left( \!{\boldsymbol{s}_{ij}^\textrm{d}} \right)}^2}\! \!+ {c_{ij}}\!\left(\! {\boldsymbol{s}_{ij}^\textrm{d}} \right){c_{ij}}\left( {\Delta \boldsymbol{s}_{ij}^\textrm{d}} \right)}} \approx \frac{{{c_{ij}}\left( {\Delta \boldsymbol{s}_{ij}^\textrm{d}} \right)}}{{{c_{ij}}{{\left( {\boldsymbol{s}_{ij}^\textrm{d}} \right)}^2}}}$. Thus, $\Delta U_i^{\textrm{d}}\!=\!\frac{{{K_i}\left( {\boldsymbol{s}_{ij}^\textrm{u}} \right){L}{c_{ij}}\left( {\Delta \boldsymbol{s}_{ij}^\textrm{d}} \right)}}{{\left(\! {{D_{\max ,i}}\left(\! {\boldsymbol{s}_{ij}^\textrm{u}} \!\right) - {\gamma _{D}}}\! \right){c_{ij}}\!\left( \!{\boldsymbol{s}_{ij}^\textrm{d}} \!\right)}}$.
 This completes the proof.  

\subsection{Proof of Proposition \ref{pro:1}} \label{ap:a} 
To prove Proposition \ref{pro:1}, we first need to prove that the number of actions for the users over the downlink is $\footnotesize\left( \begin{array}{l}
{S^\textrm{d}} - 1\\
\left| {\mathcal{N}\left( {{V_j}} \right)} \right| - 1
\end{array} \right)$. Since the SBSs will allocate all downlink resource blocks to their associated users, the interference from each SBS to its associated users is unchanged when the actions change. For example, the interference when SBS $j$ allocates resource block 1 to user $i$ is the same as the interference when SBS $j$ allocates resource block 2 to user $i$. 
Therefore, we only need to consider the number of downlink resource blocks allocated to each user and, consequently, the number of actions for the users over the downlink is $\footnotesize \left( \begin{array}{l}
{S^\textrm{d}} - 1\\
\left| {\mathcal{N}\left( {{V_j}} \right)} \right| - 1
\end{array} \right)$. Then, we need to prove that the number of actions for the users over the uplink is $\footnotesize {\sum\limits_{\boldsymbol{n} \in \mathcal{N}\left( {{V_j}} \right)} {\prod\limits_{i = 1}^{V_{j} - 1} \left( \begin{array}{l}
{n_i}\\
{S^\textrm{u}} - \sum\nolimits_{k = 1}^{i - 1} {{n_i}} 
\end{array} \right) } }$. For each vector $\boldsymbol{n} $, SBS $j$ has $\footnotesize \left( \begin{array}{l}
{n_1}\\
{S^\textrm{u}}
\end{array} \right)$ actions to allocate the resource blocks to the first user. Based on the resource blocks allocated to the first user, SBS $j$ will have $\footnotesize \left( \begin{array}{l}
{n_2}\\
{S^\textrm{u}-n_1}
\end{array} \right)$ actions to allocate the resource blocks to the second user. For other associated users, the number of actions can be derived using a similar method as the method of the second user. Therefore, the number of actions for the users over uplink is $\footnotesize {\sum\limits_{\boldsymbol{n} \in \mathcal{N}\left( {{V_j}} \right)} {\prod\limits_{i = 1}^{V_{j} - 1} \left( \begin{array}{l}
{n_i}\\
{S^\textrm{u}} - \sum\nolimits_{k = 1}^{i - 1} {{n_i}} 
\end{array} \right) } }$, and, hence, the number of actions for the users over uplink and downlink is $\footnotesize {\left( \begin{array}{l}
{S^\textrm{d}} - 1\\
\left| {\mathcal{N}\left( {{V_j}} \right)} \right| - 1
\end{array} \right) \footnotesize {\sum\limits_{\boldsymbol{n} \in \mathcal{N}\left( {{V_j}} \right)} {\prod\limits_{i = 1}^{V_{j} - 1} \left( \begin{array}{l}
{n_i}\\
{S^\textrm{u}} - \sum\nolimits_{k = 1}^{i - 1} {{n_i}} 
\end{array} \right) } } }$. 
This completes the proof.  
\subsection{Proof of Theorem \ref{th:2}} \label{ap:b}

In order to prove this theorem, we first need to prove that the ESN-based learning algorithm converges to a constant value. Here, we do not know the exact value to which the proposed algorithm converges. Hence, our purpose is to prove that the proposed algorithm cannot diverge. 
Then, we derive the exact value to which the ESN converges. For \romannumeral1), based on \cite[Theorem 8]{26}, the conditions of convergence for an ESN are: a) The ESN is $k$-step unambiguous and b) The ESN-based learning process is $k$ order Markov decision process (MDP). Here, the definition of $k$-step unambiguous can be given by:

\begin{definition}
\emph{
Given an ESN with initial state $\boldsymbol{\mu}_{0,j}$, we assume that the input sequence $\boldsymbol{x}_{0,j},\ldots ,\boldsymbol{x}_{\tau,j}$ results in an internal state $\boldsymbol{\mu}_{\tau,j}$, and the input sequence $\boldsymbol{x}'_{0,j},\ldots, \boldsymbol{x}'_{\tau',j}$  results in an internal state $\boldsymbol{\mu}'_{\tau',j}$. If $\boldsymbol{\mu}_{\tau,j} = \boldsymbol{\mu}'_{\tau',j}$ implies that $\boldsymbol{x}_{\tau-i,j} = \boldsymbol{x}'_{\tau'-i,j}$, for all $i = 0, \ldots, \tau$, then the ESN is $k$-step unambiguous.} 
\end{definition}

Here, $\boldsymbol{u}_{\tau,j} = \boldsymbol{u}'_{\tau',j}$ can be rewritten as: 
 \begin{equation}\nonumber
 \begin{split}
 \boldsymbol{u}_{\tau,j}-\boldsymbol{u}'_{\tau',j}&={\boldsymbol{W}_j\left({\boldsymbol{\mu}_{\tau - 1,j}}-{\boldsymbol{\mu}'_{\tau' - 1,j}}\right) + \boldsymbol{W}_j^\textrm{in}\left({\boldsymbol{x}_{\tau,j}}-{\boldsymbol{x}'_{\tau',j}}\right)}\\
 &=\left[ {\begin{array}{*{20}{c}}
{{w_{11}}\left( {{\mu _{\tau  - 1,j1}} - {{\mu '}_{\tau ' - 1,j1}}} \right)}\\
 \vdots \\
{{w_{{N_w}{N_w}}}\left( {{\mu _{\tau  - 1,j{N_w}}} - {{\mu '}_{\tau ' - 1,j{N_w}}}} \right)}
\end{array}} \right]\\
&\;\;\;\;-\left[ {\begin{array}{*{20}{c}}
{\boldsymbol{W}_{j1}^\textrm{in}\left( {{\boldsymbol{x}_{\tau ,j}} - {\boldsymbol{x}'_{\tau' ,j}}} \right)}\\
 \vdots \\
{\boldsymbol{W}_{j{N_w}}^\textrm{in}\left( {{\boldsymbol{x}_{\tau ,j}} - {{\boldsymbol{x}'}_{\tau' ,j}}} \right)}
\end{array}} \right],
\end{split}
 \end{equation}
 where $\mu _{\tau  - 1,jk}$ is element $k$ of $\boldsymbol{\mu}_{\tau- 1,j} $ and $\mu' _{\tau'  - 1,jk}$ element $k$ of $\boldsymbol{\mu}'_{\tau' - 1,j}$. Since the tanh function in (\ref{eq:state}) ranges from -1 to 1, the maximum value of $\left( {{\mu_{\tau-1,jk}} - {{\mu}'_{\tau'-1,jk}}} \right)$ is 2. As $w_{kk} \in \left(-1,1\right), k=1,\ldots,N_w $, $\mathop {\max }\limits_k w_{kk}\left( {\mu_{\tau-1,jk} - \mu'_{\tau-1,jk}} \right)<2$. In this case, if ${\boldsymbol{W}_{j{k}}^\textrm{in}\left( {{\boldsymbol{x}_{\tau ,j}} - {{\boldsymbol{x}'}_{\tau ,j}}} \right)} \ge 2$, then $ \boldsymbol{\mu}_{\tau,j}-\boldsymbol{\mu}'_{\tau',j} \ne 0$. 
 Therefore, if $\boldsymbol{\mu}_{\tau,j}-\boldsymbol{\mu}'_{\tau',j} = 0$, then ${\boldsymbol{\mu}_{\tau - 1,j}}={\boldsymbol{\mu}'_{\tau' - 1,j}}$ and ${\boldsymbol{x}_{\tau,j}}={\boldsymbol{x}'_{\tau',j}}$. In this case, an ESN is $k$-step unambiguous when ${\boldsymbol{W}_{j{i}}^\textrm{in}\left( {{\boldsymbol{x}_{\tau ,j}} - {{\boldsymbol{x}'}_{\tau ,j}}} \right)} \ge 2$. Since the dynamic reservoir can only store limited ESN information \cite{chen2017machine}, the dynamic reservoir state $\boldsymbol{\mu}_{\tau,j}$ only depends on the finite past reservoir states, i.e., $\boldsymbol{\mu}_{\tau-1,j},\ldots, \boldsymbol{\mu}_{\tau-k,j}$. Moreover, the number of reservoir states and actions in the proposed algorithm is finite. Therefore, the proposed ESN-based algorithm is a $k$ order MDP and, hence, condition 2) is satisfied. 
 For case 2), if the learning rate of the proposed algorithm satisfies Robbins-Monro conditions and the proposed algorithm is a $k$ order MDP, the proposed algorithm will satisfy the conditions in \cite[Theorem 1]{Chen2016Echo} and, hence, converges to a region. 
 For both cases \romannumeral1) and \romannumeral2), based on \cite[Theorem 1]{Chen2016Echo}, the proposed ESN-based learning algorithm will converge to the utility value, $\boldsymbol{\hat u}_j$. This completes the proof.

\bibliographystyle{IEEEbib}
\bibliography{references1}
\end{document}